\documentclass[twoside,11pt]{article} 
\usepackage[a4paper,margin=0.82in]{geometry}

\usepackage{times}
\usepackage{amsthm}
\usepackage{thmtools}
\usepackage{thm-restate}
\usepackage{amsmath}
\usepackage[utf8]{inputenc} 
\usepackage[T1]{fontenc}    
\usepackage{tikz}
\usepackage{url}
\usepackage{float} 
\usepackage{booktabs}       
\usepackage{amsfonts}       
\usepackage{nicefrac}       
\usepackage{microtype}      
\usepackage{lipsum}
\usepackage{fancyhdr}       
\usepackage{graphicx}       
\usepackage{todonotes}
\usepackage{comment}
\usepackage{hyperref}

\graphicspath{{media/}}   
\usepackage{url}
\usepackage{tikz}
\usetikzlibrary{lindenmayersystems,calc,decorations,decorations.pathmorphing,decorations.markings,shapes,shapes.geometric,arrows,backgrounds, patterns,decorations.pathreplacing, positioning}
\tikzset{
  currarrow/.style={
    draw=black,
    fill=black,
    single arrow,
    single arrow head extend=0.85mm,
    single arrow tip angle=45,
    minimum height=2.5mm,
    minimum width=1mm,
    inner sep=0pt
  }
}

\tikzset{snake it/.style={decorate, decoration=snake}}
\tikzstyle{vertex} = [circle, draw=black, fill=black, inner sep=0pt,  minimum size=5pt]
\tikzstyle{edgelabel} = [circle, fill=white, inner sep=0pt,  minimum size=15pt]

\newcommand{\TT}{\mathcal{T}}
\newcommand{\XX}{\mathcal{X}}

\definecolor{blue}{RGB}{0,82,147} 
\definecolor{red}{RGB}{202,033,063}

\colorlet{green}{green!50!black}
\definecolor{ForestGreen}{rgb}{.125,.5,.25}

\usepackage{microtype}      
\usepackage{lipsum}
\usepackage{fancyhdr}       
\title{Low-Cost Arborescence Under Edge Faults

}

\usepackage{graphicx} 
\usepackage{todonotes}
\usepackage{comment}
\usepackage{hyperref}

\usepackage{algorithm}
\usepackage{algorithmicx}
\usepackage[noend]{algpseudocode}

\newcommand{\parent}{p}

\newcommand{\opt}{\mathsf{opt}}
\newcommand{\cost}{\mathsf{cost}}
\newcommand{\ftp}{\mathsf{FTP}}
\newcommand{\rank}{\mathsf{rank}}
\newcommand{\Span}{\mathsf{span}}
\newcommand{\SP}{\mathsf{SP}}
\usepackage{tikz}
\usetikzlibrary{decorations.pathmorphing}
\usepackage{subcaption}

\newtheorem{theorem}{Theorem}[section]
\newtheorem{lemma}[theorem]{Lemma}

\theoremstyle{definition}
\newtheorem{definition}[theorem]{Definition}

\newtheorem{proposition}[theorem]{Proposition}
\newtheorem{remark}[theorem]{Remark}
\newtheoremstyle{normalclaim} 
  {3pt}   
  {3pt}   
  {\itshape} 
  {}      
  {\normalfont} 
  {.}     
  { }     
  {}      

\theoremstyle{normalclaim}
\newtheorem{claim}{Claim}

\newtheorem{observation}[theorem]{Observation}

\newenvironment{claimproof}[1][Proof]
  {\begin{proof}[#1]}
  {\end{proof}}

\begin{document}

\begin{center}
    \LARGE
    \textsc{Low-Cost Arborescence Under Edge Faults}
\end{center}
\noindent\rule{\textwidth}{0.4pt}
\vspace{0.1cm}  

\begin{center}
    \begin{minipage}{0.45\textwidth}
        \centering
        \large
        Dipan Dey\\
        \normalsize
        University of Houston\\
        Houston, USA\\
        ddey@cougarnet.uh.edu
    \end{minipage}%
    \hfill
    \begin{minipage}{0.45\textwidth}
        \centering
        \large
        Telikepalli Kavitha\\
        \normalsize
        Tata Institute of Fundamental Research\\
        Mumbai, India\\
        kavitha@tifr.res.in
    \end{minipage}
\end{center}
\vspace{0.1cm}  
\noindent\rule{\textwidth}{0.4pt}

\vspace{0.1cm}
\begin{center}

\textsc{Abstract}
\end{center}
\vspace{0.1cm}
Our input is a directed graph $G = (V,E)$ on $n$ vertices and $m$ edges with a designated root vertex~$r$ and a function 
$\cost: E \rightarrow \mathbb{R}_{\ge 0}$. The problem is to maintain a min-cost arborescence in $G$ in the presence 
of edge faults (a single fault at a time). Edge faults are transient and once the faulty edge is repaired, 
the original min-cost arborescence $\TT$ is restored.
Whenever an edge fault happens, we need to update $\TT$ to a min-cost arborescence
in $G-f$, where $f$ is the faulty edge. 
Since computing a min-cost arborescence in $G - f$ takes $O(m + n\log n)$ time,
we seek to construct a sparse subgraph $H$ in a preprocessing step such that in the event of any edge $f$ failing, it suffices to compute a min-cost arborescence in $H - f$ in order 
to find a low-cost arborescence in $G - f$. 

\smallskip

In the unweighted setting (i.e., $\cost(e) = 1$ for all $e \in E$), this is the fault-tolerant subgraph problem 
for single-source {\em reachability}. Baswana, Choudhary, and Roditty (SICOMP, 2018) showed a $k$-fault tolerant
reachability subgraph of 
size $O(2^kn)$, where $k$ is the number of edge faults; moreover, this bound is tight. To the best of our knowledge, 
this problem has not been addressed in the weighted setting, i.e., when edges have costs. We show a simple polynomial-time
algorithm to construct a subgraph~$H$ of size $O(n^{3/2})$ such that, for any $f \in E$, a min-cost arborescence in 
$H-f$ is a 2-approximation of a min-cost arborescence in $G-f$. Thus whenever an edge fault happens,
we can find a 2-approximate min-cost arborescence in $G-f$ in $O(n^{3/2})$ time.

\smallskip

Our second problem is in the matroid setting. The input is a 
matroid $M = (E, {\cal I})$  with a function $\cost: E \rightarrow \mathbb{R}$.
The problem is to compute a sparse $S \subseteq E$ (called a $k$-fault tolerant preserver) 
such that for any $F \subseteq E$ with $|F| \le k$, the matroid $M|(S\setminus F)$ 
(restriction of $M$ to $S\setminus F$)
contains a min-cost basis of $M|(E\setminus F)$. 
So when $F$ is the set of failed elements, it suffices to compute a min-cost basis in $M|(S \setminus F)$
rather than in $M|(E\setminus F)$. 
We show a tight bound of $k\!\cdot\!\rank(E)$ on the size of a $k$-fault tolerant preserver, where
$\rank: 2^E \rightarrow \mathbb{Z}_{\ge 0}$ is the rank function of~$M$.
\vspace{0.8cm}

\section{Introduction}
\label{sec:intro}
Our input is a directed graph $G = (V, E)$ with a designated root vertex $r$, where
the vertex $r$ has no incoming edge. An {\em arborescence} in $G$ is an acyclic subgraph in which each vertex, except one
called the root, has a unique incoming edge. Since $r$ has no  incoming edge, observe that all arborescences in $G$ have to be rooted at $r$.  Thus every vertex is reachable from $r$ in an arborescence.
There is a function $\cost: E \rightarrow \mathbb{R}_{\ge 0}$ and the problem is to compute a min-cost arborescence in $G$. 
This is a classic problem in graph
algorithms~\cite{Bock71,CL65,Edmonds67} and  the algorithm of Gabow, Galil, Spencer, and Tarjan~\cite{GGST86} finds one in $O(m + n\log n)$ time where $|E| = m$ and $|V| = n$. 

As is well-known with real-world networks, links are prone to failures. It is unlikely that several links fail simultaneously, so we consider the case of a single edge failure. Thus the setting is as follows: we have computed a min-cost 
arborescence $\TT$ in $G$ and an edge $f \in E$ has failed. If $f \in \TT$ then we need to find a new min-cost arborescence $\TT_f$ in $G - f$.\footnote{For convenience, we use $G-f$ to denote the subgraph $(V,E\setminus\{f\})$ of $G$.\label{footnote1}}
We assume any edge fault is transient and there is a repair process that restores the faulty edge.
Thus a min-cost arborescence $\TT_f$ in $G - f$ is an {\em interim} min-cost arborescence---once the edge $f$ is repaired, 
the original min-cost arborescence $\TT$ is restored. However another edge $f'$ might then fail. 
This sequence of edge failure, interim min-cost arborescence, and edge restoration is a recurring step. 

Our goal is to find each of the interim min-cost arborescences efficiently.
Whenever an edge fault $f$ happens, instead of computing a min-cost arborescence $\TT_f$ in $G - f$, 
can we find it more efficiently in real-time if there is a preprocessing step to do some preliminary work? Suppose
the preprocessing step computes for each $f \in \TT$, the min-cost arborescence $\TT_f$ in $G-f$ 
and stores these $n-1$ arborescences as $n-1$ rows in a table. 
Then, whenever there is an edge fault $f \in \TT$, we can fetch the appropriate arborescence $\TT_f$
from this table.
But such a table would take $\Theta(n^2)$ space. We seek a more space-efficient solution.
This motivates the following problem.

\begin{itemize}
    \item Preprocess $G = (V, E)$ and compute a sparse subgraph $H = (V, E')$ 
          such that it suffices to find a min-cost arborescence in $H - f$ whenever an edge $f$ fails.
\end{itemize}

Such a subgraph $H$ is a {\em 1-edge fault tolerant} (1-EFT) subgraph for min-cost arborescence in $G$.
So what we store after the preprocessing step will be the edge set of $H$.
Our objective is to find a sparse 1-EFT subgraph $H$ so as to efficiently process each edge fault $f$ 
by finding a min-cost arborescence in $H - f$ (rather than in $G-f$).

For any directed graph~$G$ with edge costs, does there always exist such a sparse 1-EFT subgraph~$H$? 
Unfortunately, we do not know the answer to this question. 
We show the following result that if we are ready to pay the price of
a small approximation factor, then there is indeed such a sparse subgraph.

\begin{theorem}
    \label{thm:first}
   Let $G = (V, E)$ be any directed graph with $\cost: E \rightarrow \mathbb{R}_{\ge 0}$.
   There is a subgraph $H$ with $O(n^{3/2})$ edges (where $|V| = n$) such that for any $f \in E$,
   the cost of a min-cost arborescence in $H-f$ is at most twice the cost of a min-cost arborescence in $G-f$.
\end{theorem}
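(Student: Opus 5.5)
The plan is to reduce each fault to a ``local repair'' problem and then bound the total number of edges used by all repairs.

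\textbf{Setup and charging argument.} Fix a min-cost arborescence $\TT$ of $G$ and put all its edges in $H$. If $f\notin\TT$, then $\TT\subseteq H-f$ is already optimal for $G-f$, so only the $n-1$ faults $f=(u,v)\in\TT$ matter. For such an $f$, let $\TT_v$ be the subtree of $\TT$ rooted at $v$. Let $G^*_f$ be the graph obtained from $G-f$ by contracting $V\setminus V(\TT_v)$ into a single root. Let $A_f$ be a min-cost arborescence of $G^*_f$, viewed as a set of edges of $G-f$ that enter vertices of $\TT_v$. The candidate solution in $H-f$ is $(\TT\setminus \TT_v\setminus\{f\})\cup A_f$, which is an arborescence of $G-f$. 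Its cost is at most $\cost(\TT)+\cost(A_f)$. Now $\cost(\TT)\le \opt(G-f)$ because deleting an edge cannot decrease the optimum. Also $\cost(A_f)\le\opt(G-f)$: if we take an optimal arborescence $\TT_f$ of $G-f$, keep only the edges entering vertices of $\TT_v$, and contract the outside, we obtain an arborescence of $G^*_f$. Together these give the factor $2$. Hence it suffices to build $H\supseteq \TT\cup\bigcup_{f\in\TT}A_f$ and show $|H|=O(n^{3/2})$. Since $H-f\supseteq$ this candidate, a min-cost arborescence in $H-f$ is no more expensive.

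\textbf{Light faults.} Call $f=(u,v)$ \emph{light} if $|\TT_v|\le\sqrt n$. The edges of $A_f$ give each vertex $w\in\TT_v$ exactly one incoming edge. So $w$ receives an edge from $A_f$ only when $v$ is an ancestor of $w$ with $|\TT_v|\le\sqrt n$. Ancestors of $w$ have nested subtrees of strictly increasing size, so there are at most $\sqrt n$ such ancestors. Thus light faults contribute at most $\sqrt n$ in-edges per vertex, i.e.\ $O(n^{3/2})$ in total.

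\textbf{Heavy faults and the main obstacle.} Call $f$ \emph{heavy} if $|\TT_v|>\sqrt n$. The heavy vertices form a subtree of $\TT$. Its leaves have pairwise disjoint subtrees of size $>\sqrt n$, so there are fewer than $\sqrt n$ leaves. Hence the heavy subtree splits into fewer than $\sqrt n$ vertex-disjoint heavy paths. Along one heavy path $v_1\to v_2\to\cdots\to v_k$, the contracted regions $V\setminus V(\TT_{v_i})$ are nested. I would try to compute all the $A_f$ on the path with a single consistent rule: Edmonds' contraction algorithm with a fixed tie-breaking order, processed top-down. The goal is a monotonicity statement: as the fault moves down the path, the in-edge chosen for a fixed $w$ changes at most $O(1)$ times per path, and only when $w$ itself or the cycle containing it leaves the shrinking subtree. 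That would give $O(\sqrt n)$ in-edges per vertex over all heavy paths, hence $O(n^{3/2})$ overall.

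I expect this monotonicity or sharing claim for heavy paths to be the main difficulty. If it fails, my fallback is to relax $A_f$ to a $2$-approximate repair that only reconnects $\TT_v$ through the subtree of the next heavy-path vertex. The remaining $O(\sqrt n)$ light hanging parts would be handled exactly as in the light case. The factor-$2$ accounting would then need to be redone for this relaxed repair.
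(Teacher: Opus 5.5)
\textbf{Verdict: there is a genuine gap.} Your factor-$2$ accounting is correct. It is the same inequality chain as the paper's, with $\cost(A_f)\le\opt(G-f)$ in place of the bound on a reconnecting path. Your light-fault count is also correct. The problem is the heavy case, which carries the entire size bound.

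\textbf{The heavy-path claim fails.} You claim that along a heavy path the in-edge of a fixed $w$ in $A_f$ changes only $O(1)$ times. This is unproven, and as stated it is false. Take the path $r=u_0\to u_1\to\cdots\to u_k$ with a leaf $w$ below $u_k$, and costs
\begin{itemize}
\item $0$ on all tree edges,
\item $1$ on the edges $(w,u_j)$,
\item $b_l$ on the edges $(u_{l-1},w)$, where $b_1>b_2>\cdots>b_k>0$.
\end{itemize}
For $f_i=(u_{i-1},u_i)$, the only in-edge of $u_i$ in $G-f_i$ is $(w,u_i)$. Hence every $u_j$ with $j\ge i$ is a descendant of $w$ in $A_{f_i}$, and $w$ must take the cheapest edge from the contracted root, namely $(u_{i-1},w)$. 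So $w$ never leaves the shrinking subtree, yet it receives $\Theta(n)$ distinct in-edges from faults on a single heavy path.

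The total in this example is still $O(n)$, so what you would really need is an amortized bound on $|\bigcup_f A_f|$. Exact re-optimizations of the cut-off subtree can reorganize through Edmonds-type cycle contractions, and nothing in your outline controls them. It is not clear that such a bound holds at all. Your fallback is too unspecified to evaluate.

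\textbf{The missing idea.} You do not need an optimal re-arborescence of $\TT_v$. Keep $\TT-f$ and add only a shortest path $\rho_v$ in $G-f$ from $\XX(v)$ to $v$. Every arborescence of $G-f$ contains an $r$--$v$ path, so $\|\rho_v\|\le\opt(G-f)$. Then $(\TT-f)\cup\rho_v$ costs at most $2\opt(G-f)$, by exactly your argument.

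What this buys is that the added edges are single-fault replacement shortest paths, and their union can be controlled:
\begin{itemize}
\item Color each edge by the first $\rho_i$ containing it.
\item Charge color $i$ to the $c_i(c_i+1)/2$ pairs $(x,y)$ such that $\rho_i[x,y]$ leaves $x$ and enters $y$ along color-$i$ edges.
\item Show each pair is charged at most three times. By uniqueness of $\SP(x,y)\diamond e$, at most one charging path avoids $\SP(x,y)$. Every other charging path must contain the first or last edge of $\SP(x,y)$, and each of these has a single color.
\item Conclude $\sum_i c_i\le\sqrt{6}\,n^{3/2}$ by Cauchy--Schwarz.
\end{itemize}
This handles all faults uniformly, with no light/heavy split.
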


We show a simple algorithm that takes $O(mn+n^2\log n)$ time to construct the subgraph~$H$
in a preprocessing step. 
Thus, whenever an edge $f$ fails, we run the algorithm of Gabow et al.~\cite{GGST86} in $H-f$ and find an arborescence of cost at most $2\opt$ in $O(n^{3/2})$ time, where $\opt$ is the cost of a min-cost
arborescence in $G-f$. 
So after each edge fault, our subgraph $H$ provides a 2-approximate solution in $O(n^{3/2})$ time
to our min-cost arborescence problem.
Hence for a dense graph $G$, whenever an edge fault happens,
we achieve significant time-saving at the cost of a 2-approximation.
Furthermore, storing $H$ takes only $O(n^{3/2})$ space.

\subparagraph{The unweighted setting.} The above problem has been well-studied in the unweighted setting, i.e., when $\cost(e) = 1$ for all $e \in E$.
This is the $k$-fault tolerant single-source reachability subgraph ($k$-FTRS) problem, where $k$ is the number of edge/vertex faults.
For any directed graph $G$ on $n$ vertices with a designated source, Baswana, Choudhary, and Roditty~\cite{BCR18} showed
there is a $k$-fault tolerant reachability subgraph of size $O(2^kn)$, where $k$ is the number of edge faults.
They also showed a matching lower bound of $\Omega(2^kn)$ edges for such subgraphs for all $n, k$ where $2^k \le n$.

Prior to their work, the only result for $k$-FTRS was for $k = 1$. The
special case of 1-FTRS is closely related to {\em dominators} by Lengauer and Tarjan~\cite{LT79}. The problem of 1-FTRS was also 
considered in \cite{BCR15} where they showed that given any arbitrary reachability tree $T$, one can efficiently compute
a 1-FTRS with at most $2n$ edges that contains~$T$. 

To the best of our knowledge, this problem has not been addressed in the weighted setting, i.e., when edges have costs.
This is surprising since the weighted variant, which is the problem of maintaining a min-cost arborescence in the presence of edge faults, is very natural. It has applications in telecommunications where data must flow from a central
provider to multiple locations; moreover, different links in the network have different costs. So we seek min-cost reachability from the central provider to all locations and this is the min-cost arborescence problem. 
As is well-known, links may temporarily 
fail. Then, while the faulty links get repaired and restored, in order to maintain reachability from the central 
provider at minimum cost,
we need to update the min-cost arborescence in $G$ to one in $G - F$, where $F$ is the set of faulty edges.

The problem corresponding to min-cost arborescence in the undirected setting is the
min-cost spanning tree (MST) problem. So $G = (V,E)$ is an undirected graph with $\cost: E \rightarrow \mathbb{R}$
and an MST in $G$ has to be maintained in the presence of edge faults. 
This problem has been studied as the  All Best Swap Edges problem for MST~\cite{DixonRT92,Pettie05}.

A more general problem is that of maintaining a min-cost basis in a matroid $M = (E, {\cal I})$
in the presence of faulty elements; we assume there can be up to $k$ failures, for any $k \ge 1$.
The motivation is that elements may get lost or corrupted (at most $k$ at a time) 
and while the lost/corrupted elements are retrieved, we need to update the min-cost basis.

\subparagraph{Fault tolerant preserver for matroid basis.}
Let $M = (E, {\cal I})$ be a matroid  with a function $\cost: E \rightarrow \mathbb{R}$.
For any $S \subseteq E$, the {\em restriction} of matroid $M$ to $S$ is the matroid
$M|S = (S, \,{\cal I}|S)$ where ${\cal I}|S = \{X \in {\cal I}: X \subseteq S\}$. Thus 
$M|S$ is a matroid on $S$ of $\rank(S)$, where $\rank: 2^E \rightarrow \mathbb{Z}_{\ge 0}$ 
is the rank function of $M$.
Our problem is to compute a sparse subset $S$ of $E$, as defined below,
where $k$ is an upper bound on the number of faulty elements.

\begin{definition}
\label{def:k-ftor}
A subset $S \subseteq E$ is called a $k$-fault tolerant
preserver ($k$-$\ftp$) if for every $F \subseteq E$ such that $|F| \le k$, the
restriction $M|(S\setminus F)$ contains a min-cost basis of $M|(E\setminus F)$.
\end{definition}

Our problem is to compute a sparse $k$-$\ftp$, i.e., a sparse subset $S$ of the ground set $E$ such that 
for any subset $F$ of $E$ with $|F| \le k$, there is always a min-cost basis of $M|(E\setminus F)$ that is
present in the matroid $M|(S\setminus F)$. The interpretation is that elements 
in $F$ have failed and a min-cost basis of the surviving matroid, i.e., the matroid $M|(E\setminus F)$, 
needs to be maintained. 
Rather than compute a min-cost basis in $M|(E\setminus F)$, we would like to compute it in 
$M|(S \setminus F)$, for a much smaller set $S$. We show the following result.

\begin{theorem}
    \label{thm:k-ftor}
For any matroid $M = (E, {\cal I})$ with a function $\cost: E \rightarrow \mathbb{R}$ and any integer $k \ge 1$, 
there is a polynomial-time algorithm to compute a $k$-$\ftp$ of size $k\!\cdot\!\rank(E)$ for $M$.
\end{theorem}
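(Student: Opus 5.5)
The natural construction is to take $k$ disjoint min-cost bases greedily, one after another. Fix a total order on $E$ consistent with $\cost$, breaking ties arbitrarily but consistently. Let $B_1$ be the greedy (min-cost) basis of $M$. Let $B_2$ be the greedy basis of $M|(E\setminus B_1)$, $B_3$ the greedy basis of $M|(E\setminus(B_1\cup B_2))$, and so on up to $B_k$. Set $S = B_1\cup\cdots\cup B_k$. Each $B_i$ is independent, so $|S|\le k\cdot\rank(E)$. Padding $S$ with arbitrary elements if an exact count is desired is harmless, since supersets of an $\ftp$ remain $\ftp$s. Each round is one run of the greedy algorithm, so the construction is polynomial time.

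For correctness, fix $F$ with $|F|\le k$. Let $B^*$ be the greedy basis of $M|(E\setminus F)$ with respect to the same total order; it is a min-cost basis of that matroid. It suffices to show $B^*\subseteq S$. Take $e\in B^*$. By the greedy characterization, $e\notin \Span(\{x\in E\setminus F: x \prec e\})$, where $\prec$ is the fixed order. Suppose for contradiction that $e\notin S$. Then for every $i$, $e$ was rejected by the greedy run producing $B_i$. Since $e$ was available in each round, this means $e\in \Span(\{x\in B_i : x\prec e\})$. Let $C_i$ be the fundamental circuit of $e$ in $\{x\in B_i: x\prec e\}\cup\{e\}$. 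The sets $C_i\setminus\{e\}$ are pairwise disjoint because the $B_i$ are disjoint, and all their elements precede $e$.

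The step requiring care is turning these $k$ disjoint witnesses into a contradiction with $|F|\le k$. If some $C_i\setminus\{e\}$ avoids $F$, then $e\in\Span(\{x\in E\setminus F: x\prec e\})$, contradicting $e\in B^*$. Otherwise each of the $k$ disjoint sets $C_i\setminus\{e\}$ meets $F$, so $|F|\ge k$. This forces equality with exactly one failed element in each. That alone is not a contradiction, so I will sharpen the argument.

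To sharpen it, note that $e$ itself, being in $B^*$, is not in $F$. Also, the count of witnesses can be taken to be $k+1$ rather than $k$. In $M|(E\setminus(B_1\cup\cdots\cup B_k))$, the element $e$ is still present because $e\notin S$. Hence the elements of $E\setminus S$ preceding $e$, together with $B_1,\dots,B_k$, yield a $(k+1)$-st disjoint spanning witness. Concretely, $\{x\prec e\}\cup\{e\}$ restricted to $E\setminus S$ contains no failed element that matters, and the trivial witness $\{e\}$ itself is not available. So the clean fix is to redo the count: each $C_i\setminus\{e\}$ lies in $B_i$, and $F$ meets all $k$ of them. If $|F|\le k-1$ we are done, and the boundary case $|F|=k$ needs the bound as stated.

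I therefore expect the main obstacle to be the tight case $|F|=k$. With only $k$ disjoint bases, an adversary can hit one element of each circuit. My first remedy is to check whether the statement intends $k$ disjoint bases to handle $k$ faults, or whether the construction must be $k+1$ rounds giving size $(k+1)\rank(E)$. Since the abstract claims $k\cdot\rank(E)$ is tight, I would instead exploit that $B_1$ need not be entirely protected. If $F$ hits $C_1$ at some $f\prec e$, then $f\in F$ removes a $B_1$ element, but $e$ could then replace $f$ only if $e$ is needed. I would try an exchange argument showing that some element of $B_1\setminus F$ is available. Failing that, I would accept the $k$-disjoint-circuit counting as the core and resolve the boundary case by a careful exchange within $B_1$, which is where the real work lies.
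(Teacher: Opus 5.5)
\textbf{Gap: the case $|F|=k$ cannot be closed with $k$ bases.} With only $k$ disjoint greedy bases, your $S$ is not a $k$-$\ftp$ in general, so no exchange argument inside $B_1$ will rescue it.

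\emph{Counterexample.} For $k=1$, take the uniform matroid $U_{1,2}$ on $\{a,b\}$ with $\cost(a)<\cost(b)$. Then $S=B_1=\{a\}$, and for $F=\{a\}$ the unique basis $\{b\}$ of $M|(E\setminus F)$ is not contained in $S$.

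\emph{No $(k+1)$-st witness.} The witness you try to draw from $E\setminus S$ does not exist in this example, since no element of $E\setminus S$ precedes $b$. In general nothing forces the elements of $E\setminus S$ that precede $e$ to span $e$.

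\emph{The bound itself is off by one.} In $U_{1,k+1}$, every element must lie in every $k$-$\ftp$: fail the other $k$ elements. So no $k$-$\ftp$ of size $k\cdot\rank(E)$ exists there. The same happens in the paper's multigraph example if you use $k+1$ parallel copies; the remark's $k$ copies only exercise $|F|\le k-1$. The paper's own Algorithm~\ref{alg1} in fact runs for $i=1,\ldots,k+1$ and returns $B_1\cup\cdots\cup B_{k+1}$. So the bound it actually proves, and the tight one, is $(k+1)\cdot\rank(E)$. Your instinct that $k+1$ rounds are needed was correct, and you should have followed it rather than the stated count.

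\textbf{With $k+1$ rounds, your argument is complete.} Suppose $e\in B^*\setminus S$. The $k+1$ sets $\{x\in B_i: x\prec e\}$ are pairwise disjoint, and each spans $e$. At most $|F|\le k$ of them meet $F$. The one that avoids $F$ places $e$ in $\Span(\{x\in E\setminus F: x\prec e\})$, contradicting $e\in B^*$.

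\textbf{Comparison with the paper's route.} This is a genuinely different proof. The paper proceeds in three steps:
\begin{enumerate}
\item It proves a single-swap lemma: $B_1-f+x$ is a min-cost basis of $M|(E-f)$ for some $x$.
\item It shows $B_2$ contains such a swap element, using the elements $b_i\in B_2$ for which $B_2-b_i+x$ is a basis.
\item It handles $k$ faults by an informal cascade: each fault pulls one element up from $B_{i+1}$ to $B_i$ and uses up one basis.
\end{enumerate}
Your pigeonhole argument handles every $F$ at once, whether the faults lie in some $B_i$ or outside $S$. It needs no recursion, and it proves the stronger fact that the greedy basis of $M|(E\setminus F)$ under the fixed order lies entirely in $S$. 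The paper's version instead exhibits the explicit swaps, which is closer to an update procedure after each fault.
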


\begin{remark}
\label{remark:lb}
    Observe that there is a lower bound of $k\!\cdot\!\rank(E)$ on $k$-$\ftp$. Consider a connected graph $G = (V,E)$ on
    $n$ vertices
where edge costs are distinct. So $G$ has a unique MST $T$. Let $G^* = (V, E^*)$ be the multigraph where each $e \in E$ 
has $k$ copies $e_1,\ldots,e_k$ in $E^*$. Note that a $k$-$\ftp$ for $G^*$ has to contain all $k$ copies of 
$T$ in $G^*$, thus its size is $k(n-1)$.
\end{remark}

\subsection{Related Work}
\label{sec:background}
 A min-cost arborescence rooted at $r$ is very different from a shortest paths tree rooted at $r$ as a shortest paths tree minimizes the distance from $r$ to every other vertex individually, while a min-cost arborescence minimizes the total cost of all edges in the arborescence. Nevertheless,
 the single-source shortest paths problem is perhaps the closest problem to the min-cost arborescence problem.
Similar to the definition of a 1-EFT min-cost arborescence preserver, one can define a 1-EFT
subgraph that preserves the shortest paths tree rooted at $r$ in the presence of single edge faults.
However for weighted graphs (directed or undirected), no sparse 1-EFT subgraph for single-source shortest paths 
is possible. Demetrescu et al.~\cite{DemetrescuTCR08} showed there exist graphs on $n$ vertices whose 1-EFT subgraph for single-source
shortest paths must have $\Omega(n^2)$ edges. For unweighted graphs, Parter and Peleg~\cite{ParterP13} showed
such a 1-EFT subgraph with $O(n^{3/2})$ edges; they also showed a matching lower bound.

For weighted undirected graphs, several results are known for sparse fault-tolerant subgraphs that 
maintain approximate distances from the source vertex $r$ upon the failure of any edge or vertex. 
Baswana and Khanna~\cite{BaswanaK13} showed a subgraph with $O(n\log n)$ edges that, upon the failure of any single vertex, 
approximates distances from $r$ within a multiplicative stretch of 3. Nardelli, Proietti, and Widmayer~\cite{NardelliPW03} showed a subgraph with $2n$ edges that preserves distances from $r$ up to a multiplicative stretch of 3 upon
failure of a single edge. Bil\'o et al.~\cite{BiloGLP14} showed a subgraph with $O(n\log n/{\epsilon}^2)$ edges that preserves a $(1+\epsilon)$-shortest path from the source vertex $r$ after failure of an edge as well as a vertex.

For the unweighted spanning-forest matroid, it was shown in \cite{NI92} that any $k$-edge/vertex connected undirected graph $G$ 
on $n$ vertices has a sparse $k$-connected spanning subgraph with $O(kn)$ edges.
The fault-tolerant matroid basis problem was very recently studied in \cite{BentertFGM25}.
For any matroid $M = (E, {\cal I})$ and a non-negative integer $k$, a set $B \subseteq E$ 
is a $k$-fault-tolerant basis of $M$ if $B$ 
is a min-size set such that $\rank(B\setminus F) = \rank(M)$ for every $F \subseteq B$ with $|F| \le k$.
The problem of deciding if a given matroid admits a $k$-fault-tolerant basis or not 
was considered in \cite{BentertFGM25}, where it was shown that this problem is
NP-hard, even  for $k = 1$. They gave a fixed-parameter tractable (FPT) algorithm for the $k$-fault-tolerant
basis problem, parameterized by both $k$ and the rank $r$ of the matroid.

\subsection{Our Techniques}
\label{sec:techniques}
The $k$-FTRS result for unweighted graphs in \cite{BCR18} shows an algorithm to construct a $k$-FTRS $H$ where the in-degree of every vertex is at most $2^k$: this bounds the size of $H$; but in our setting where edges have costs, there is no guarantee that a 
min-cost arborescence in $H-f$ is a low-cost arborescence in $G-f$.
Moreover, as seen in Section~\ref{sec:background}, sparse fault tolerant subgraphs are rare for problems on weighted directed graphs,
even when we relax the requirement to approximate solutions. 
Thus it is pleasing to see a sparse 1-EFT subgraph for 2-approximate min-cost arborescence. 
Furthermore, our algorithm to construct this sparse 1-EFT subgraph $H$ is truly simple.

The subgraph $H$ consists of a min-cost arborescence $\TT = \{e_1,\ldots,e_{n-1}\}$ in $G$ along with certain paths $\rho_1,\ldots,\rho_{n-1}$. Let vertex $v_i$ be the head of edge $e_i$ (so $e_i$ enters $v_i$).
Each of these paths $\rho_i$ is a shortest path in $G - e_i$ from the component containing the root $r$ in $\TT - e_i$ to $v_i$. 
For any $e_i$, it is easy to show that a min-cost arborescence in $H - e_i$ is a 2-approximate min-cost
arborescence in $G - e_i$. 

We use a simple and clean charging method to bound the size of $H$. 
The edges of each path $\rho_i$ are charged to pairs 
of vertices
$(x,y)$ such that (i)~$\rho_i$ has a subpath from $x$ to $y$, 
(ii)~$x$ gets a new outgoing edge from~$\rho_i$ and (iii)~$y$ gets a new incoming edge from $\rho_i$.
Any particular path $\rho_i$ may charge $\Theta(n^2)$ pairs of vertices.
Nevertheless, the total charge paid by all the $n^2$ pairs of vertices summed over the $n-1$ paths $\rho_1,\ldots,\rho_{n-1}$
is only $3n^2$
(see Lemma~\ref{lem_two_charged_paths}). 
This bound of $3n^2$, along with the Cauchy-Schwarz inequality, allows us to bound the number of edges in $H$ by $O(n^{3/2})$.

We would like to point out that this type of analysis has been seen in the literature on distance preservers to
bound the number of edges in a collection of (non-faulty) shortest paths~\cite{Bodwin,CE05}, e.g., the union of
any $p$ shortest paths in a directed weighted graph is bounded by $O(n\sqrt{p})$ edges~\cite{CE05}. However our result is technically stronger
because it deals with {\em 1-edge fault replacement paths} while previous work dealt with exact shortest paths. Indeed, the
graph on which we compute each replacement path (see step~4 of Algorithm~\ref{alg0}) 
is changing as per the faulty edge while the graph remains static in the setting of  exact shortest paths.
Section~\ref{sec:approx} has all the details. 
In the appendix we discuss issues involved in extending our construction to a 1-EFT preserver for min-cost arborescence.

Our $k$-$\ftp$ for min-cost basis of matroid is based on the idea of ``swap elements''. 
Let $B_1$ be a min-cost basis of $M$. It is easy to show that a min-cost basis $B_2$ of $M|(E \setminus B_1)$ has an element $e$ for each $f \in B_1$ such that $B_1 - f + e$ is a min-cost basis of $M|(E \setminus \{f\})$. Extending this idea for $k+1$ levels yields our $k$-$\ftp$. These details are in Section~\ref{sec:matroids}.
We discuss preliminaries in Section~\ref{sec:prelims} and conclude in Section~\ref{sec:discussion}.

\section{Preliminaries}
\label{sec:prelims}
This section describes notation that will be used in the rest of the paper.
Our input is a directed graph $G = (V,E)$ with a designated root vertex $r$
and a function $\cost: E \rightarrow \mathbb{R}_{\ge 0}$. 
An arborescence is a directed tree rooted at $r$.

For paths, it will be convenient to view the $\cost$ function on the edge set as a length function. Thus
a {\em shortest path} from vertex $u$ to vertex $v$ is a path of minimum cost from $u$ to $v$. 
For any path $\rho$, let $||\rho||$ be the sum of costs of edges in $\rho$, i.e., $||\rho|| = \sum_{e\in\rho} \cost(e)$.

\begin{itemize}
    \item  For any pair of vertices $(u,v)$, it will be convenient to assume that the shortest path from $u$ to $v$
is unique. A random perturbation of the given edge costs achieves the property of unique shortest paths: 
see \cite{ParterP13}. The  property of unique shortest paths has been used in several previous works, e.g., \cite{BernsteinK09,DuanR22,GuptaS18,HershbergerS01}. 
\end{itemize}
We denote the shortest path from $u$ to $v$ by $\SP(u,v)$. 
So $||\SP(u,v)||$ is the distance from $u$ to $v$ in~$G$, i.e., the sum of costs of edges on the shortest path from $u$ to $v$.

\medskip

Let $G-F = (V, E\setminus F)$ be the graph obtained after deleting all edges in $F$ from the graph $G$. 
As in $G$, we assume that in $G-F$ as well, there is a unique shortest path from one vertex to another.
For any $(u,v) \in V \times V$, let $\SP(u,v) \diamond F$ be the shortest path from $u$ to $v$ in $G - F$. 
So $||\SP(u,v)\diamond F||$ is the distance from $u$ to $v$ in $G-F$. 
When $F = \{f\}$, we will denote $G - F$ by $G - f$ (see Footnote~\ref{footnote1}). 
We will use $||\SP(u,v)\diamond f||$ to denote the distance from $u$ to $v$ 
in $G-f$. Similarly, we will use $\TT - f$ to denote $\TT\setminus\{f\}$.

Let $G^r = (V,E^r)$ denote the {\em reverse graph} of $G$, i.e., every edge $(a,b) \in E$ is replaced by $(b,a)$ in $E^r$. Thus edge directions in $G$ are reversed in $G^r$; also
$\cost(b,a)$ in $G^r$ will be equal to $\cost(a,b)$ in $G$, for any $(a,b) \in E$.

\subparagraph{Matroids.} We refer to the book by Oxley~\cite{Oxley92} and lecture notes by Goemans~\cite{Goemans}
for an introduction to matroids.
A matroid $M$ is defined on a finite ground set $E$ and a collection ${\cal I} \subseteq 2^E$: the sets in ${\cal I}$ 
are said to be independent. For $M$ to be a matroid, the collection ${\cal I}$ should be {\em downward-closed} 
($X \in {\cal I}$ implies all subsets of $X$ are also in ${\cal I}$) and 
the {\em exchange} property has to hold in ${\cal I}$. That is, if $X, Y \in {\cal I}$ with $|Y| > |X|$ then $\exists\, e \in Y \setminus X$
such that $X\cup\{e\} \in {\cal I}$. We will use $X + e$ to denote $X \cup \{e\}$ and $X - f$ to denote $X \setminus \{f\}$.

\begin{itemize}
    \item An inclusion-wise maximal independent set $B$ is called a {\em basis} of $M$. All bases of $M$ 
    have the same size called the {\em rank} of $M$.
    \item The rank of a subset $S \subseteq E$, denoted by $\rank(S)$, is the maximum
size of an independent set $X \subseteq S$; the function $\rank : 2^E \rightarrow \mathbb{Z}_{\ge 0}$ 
is the rank function of $M$.
   \item A set $S \subseteq E$ spans an element $x \in E$ if $\rank(S + x) = \rank(S)$. The {\em span} of $S$ is the set 
   $\Span(S) = \{x \in E: S\ \text{spans}\ x\}$.
\end{itemize}

\section{Constructing a 1-EFT Subgraph for Low-Cost Arborescence}
\label{sec:approx}

Let $\TT$ be a min-cost arborescence in $G = (V,E)$ with $\cost: E \rightarrow \mathbb{R}_{\ge 0}$. 
As mentioned earlier, there are several polynomial-time algorithms~\cite{Bock71,CL65,Edmonds67,GGST86} 
known to compute a min-cost arborescence 
in $G$ and the fastest among these takes $O(m + n\log n)$ time~\cite{GGST86}, where $|E| = m$ and $|V| = n$.

For any $v \in V \setminus \{r\}$, let $\parent(v)$ be $v$'s parent in the arborescence $\TT$. 
Let $f = (p(v),v) \in \TT$.
Deleting the edge $f$ from the input graph $G$ will partition the arborescence $\TT$ in two parts: 
\begin{itemize}
    \item subtree $\TT(v)$: this is the subtree of $\TT$ rooted at $v$ and
    \item subtree  $\XX(v)$: this is the component containing $r$ in $\TT - f$.
\end{itemize}

So $\XX(v)$ is the  subtree $(\TT - f)\setminus\TT(v)$.
There is no path in $\TT - f$ from $r \in \XX(v)$ to $\TT(v)$ (see Figure~\ref{fig1}). 
Let $\rho_v$ denote the shortest path from $\XX(v)$ to $v$ in $G - f$, i.e., $\rho_v$ is the shortest path in $G - f$ from any vertex in 
$\XX(v)$ to $v$, where each edge $e$ has weight $\cost(e)$. 

   \begin{figure}[hpt!]
\centering

        \begin{tikzpicture}[scale=0.5]   
        \coordinate (r) at (5,15);
        \coordinate (b) at (2.5,8);
        \coordinate (a) at (3,10);
        \coordinate (g3) at (6,5);
        
       \draw[very thick, decorate, decoration={snake, amplitude=0.5mm, segment length=5mm}] (r) -- (11,2);
       \draw[very thick, decorate, decoration={snake, amplitude=0.5mm, segment length=5mm}] (r) -- (a);
       \draw[very thick, decorate, decoration={snake, amplitude=0.5mm, segment length=5mm}] (b) -- (1,2);
       \draw[very thick, decorate, decoration={snake, amplitude=0.5mm, segment length=5mm}] (b) -- (6,2);
       \draw[very thick, decorate, decoration={snake, amplitude=0.5mm, segment length=5mm}] (a) -- (8,2);  
       \draw[very thick, red, ->] (a) -- (b); 
       \draw[very thick] (1,2) -- (6,2);
       \draw[very thick] (8,2) -- (11,2); 
       \node[left] at (a) {$p(v)$};
       \node[left] at (b) {$v$};
       \node[above] at (8,5) {$u$};
       \node[above] at (6.4,8) {$\XX(v)$};
       \node[above] at (3,3) {$\TT(v)$};
       \node[above] at (r) {$r$};
       \path [very thick,draw,cyan, ->] (8,5) .. controls (5,2) and (4,2) ..  (2.5,7.7);  
         \end{tikzpicture}
        \caption{Let $f = (p(v),v)$. A shortest path $\rho_v$ from $\XX(v)$ to $v$ in $G - f$ is highlighted in \textcolor{cyan}{cyan}.}
         \label{fig1}
\end{figure}

\begin{algorithm}
\caption{Computing a sparse 1-EFT subgraph $H$ for a low-cost arborescence}
\label{alg0}
\begin{algorithmic}[1]
\State Label the vertices of $V \setminus \{r\}$ as $v_1,\ldots,v_{n-1}$ arbitrarily.
\State Compute a min-cost arborescence $\TT$ in $G = (V,E)$.
\For{each edge $f = (\parent(v_i),v_i) \in \TT$}
\State Find a shortest path $\rho_i$ from $\XX(v_i)$ to $v_i$ in $G - f$.
\EndFor
\State Return $H = (V,E_H)$ where $E_H = \TT \, \bigcup_{i=1}^{n-1}\rho_i$.
\end{algorithmic}
\end{algorithm}   

Our algorithm is described as Algorithm~\ref{alg0}.    
A shortest path $\rho_i$ from $\XX(v_i)$ to $v_i$ (in step~4) 
can be obtained by running Dijkstra's algorithm from $v_i$ in the reverse graph $(G - f)^r$; the nearest vertex in $\XX(v_i)$ 
to $v_i$ in $(G - f)^r$ will be the starting vertex $u$ on $\rho_i$ (see Figure~\ref{fig1}). 
Thus Algorithm~\ref{alg0} runs $n-1$ shortest path computations in
various subgraphs of $G^r$. Hence its running time is $O((m + n\log n)\cdot n) = O(mn + n^2\log n)$.

\subparagraph{Edge faults.}
Suppose an edge fault $f = (\parent(v_i),v_i)$ occurs. If $G-f$ admits an arborescence then $v_i$ is reachable from
$\XX(v_i)$ in $G-f$, so $\rho_i$ (see step~4 of Algorithm~\ref{alg0}) exists. Thus $H - f$ also 
admits an arborescence since the path $\rho_i$ and all edges of $\TT - f$ are present in $H-f$. 

Let $\TT_f$ be a min-cost arborescence in $G - f$ and let $A_f$ be a min-cost arborescence in $H - f$.
The following lemma is easy to show.

\begin{lemma}
\label{lem : preserver one vertex}
   For $A_f$ and $\TT_f$ as defined above, $\cost(\TT_f) \le \cost(A_f) \leq 2\cost(\TT_f)$.
\end{lemma}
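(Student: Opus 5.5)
The lower bound $\cost(\TT_f) \le \cost(A_f)$ is immediate. $H - f$ is a subgraph of $G - f$, so every arborescence in $H-f$ is also an arborescence in $G-f$. Since $\TT_f$ is a min-cost arborescence in $G-f$, the bound follows.

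For the upper bound I will exhibit an explicit arborescence in $H-f$ of cost at most $2\cost(\TT_f)$; the min-cost arborescence $A_f$ then costs no more. We may assume $f = (\parent(v_i),v_i) \in \TT$. If $f \notin \TT$, then $\TT$ itself lies in $H-f$ and is a min-cost arborescence of $G-f$, so $\cost(A_f) = \cost(\TT_f)$.

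The construction starts from the subgraph $(\TT - f) \cup \rho_i$ of $H - f$, in which every vertex is reachable from $r$. Indeed, $r$ reaches all of $\XX(v_i)$ via $\TT - f$. It then reaches $v_i$ via $\rho_i$, which starts in $\XX(v_i)$. From $v_i$ it reaches all of $\TT(v_i)$ via $\TT - f$. Any subgraph in which all vertices are reachable from $r$ contains a spanning arborescence rooted at $r$ (for instance a BFS tree). Its cost is at most
\[
\cost(\TT - f) + \|\rho_i\| \le \cost(\TT) + \|\rho_i\|,
\]
because costs are nonnegative.

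It remains to bound the two terms by $\cost(\TT_f)$ each.

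First, $\cost(\TT) \le \cost(\TT_f)$. This holds because $\TT_f$ is also an arborescence in $G$ and $\TT$ is a min-cost arborescence in $G$.

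Second, $\|\rho_i\| \le \cost(\TT_f)$. In $\TT_f$ take the unique path $P$ from $r$ to $v_i$; it uses only edges of $G-f$. Since $r \in \XX(v_i)$ and $v_i \notin \XX(v_i)$, let $u$ be the last vertex of $P$ lying in $\XX(v_i)$. The suffix of $P$ from $u$ to $v_i$ is then a path in $G-f$ from $\XX(v_i)$ to $v_i$. Because $\rho_i$ is a shortest such path, $\|\rho_i\|$ is at most the cost of this suffix. That suffix costs at most $\cost(P)$, which is at most $\cost(\TT_f)$ by nonnegativity of costs.

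Combining, $\cost(A_f) \le \cost(\TT) + \|\rho_i\| \le 2\cost(\TT_f)$. The only step needing care is extracting a genuine arborescence from $(\TT - f)\cup\rho_i$, since $\rho_i$ may revisit vertices of $\TT(v_i)$. This is why I take any spanning arborescence of the reachable subgraph and use only that its cost is bounded by the total edge cost of the subgraph.
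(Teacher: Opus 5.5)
Your proof is correct and follows the paper's argument: bound $\cost(A_f)$ by $\cost(\TT)+\|\rho_i\|$ because $(\TT-f)\cup\rho_i\subseteq H-f$, then bound each term by $\cost(\TT_f)$, using the minimality of $\TT$ for the first and the $r$-to-$v_i$ path inside $\TT_f$ for the second. Your extra care (explicitly extracting an arborescence, and the case $f\notin\TT$) is fine; the suffix step is harmless but unnecessary, since $P$ itself already starts at $r\in\XX(v_i)$.
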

 \begin{proof}
    We have $\cost(\TT_f) \le \cost(A_f)$ because $\TT_f$ is a min-cost arborescence in $G - f$. 
    Since $A_f$ is a min-cost arborescence in $H-f$ and all edges of 
    $(\TT-f) \cup \rho_i$ are in $H-f$, we have 
    $\cost(A_f) \le \cost(\TT) + ||\rho_i||$.
    Recall that $\rho_i$ is a shortest path in $G - f$ from $\XX(v_i)$ to $v_i$ where $r \in \XX(v_i)$. So $||\rho_i|| \le ||\SP(r,v_i)\diamond f||$,
    where $||\SP(r,v_i)\diamond f||$ is the distance from $r$ to $v_i$ in $G - f$.

    Observe that $\cost(\TT_f) \ge ||\SP(r,v_i)\diamond f||$ since $v_i$ is reachable from $r$ in
    $\TT_f$. So $\TT_f$ has to contain a path from $r$ to $v_i$ in $G-f$.
    Hence $||\rho_i|| \le ||\SP(r,v_i)\diamond f|| \le \cost(\TT_f)$. So it follows that 
    $\cost(A_f) \ \le \ \cost(\TT) + ||\rho_i|| \ \le \ \cost(\TT) + \cost(\TT_f) \ \le \ 2\cost(\TT_f)$.
\end{proof}

Proposition~\ref{prop1} follows immediately from Figure~\ref{lem : preserver one vertex}.
\begin{proposition}
    \label{prop1}
The subgraph $H$ is a 1-EFT 2-approximate min-cost arborescence preserver.  
\end{proposition}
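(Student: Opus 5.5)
The proof is a case split on whether the failed edge $f$ lies in $\TT$, with Lemma~\ref{lem : preserver one vertex} handling the main case. The claim is that for every $f \in E$, a min-cost arborescence in $H-f$ costs at most twice a min-cost arborescence in $G-f$. We may assume $G-f$ admits an arborescence; otherwise there is nothing to prove.

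\emph{Case $f \notin \TT$.} Then $\TT$ is an arborescence in $G-f$. It is also a min-cost one, since every arborescence of $G-f$ is an arborescence of $G$ and so costs at least $\cost(\TT)$. Since $\TT \subseteq E_H$ and $f \notin \TT$, $\TT$ is contained in $H-f$. Hence a min-cost arborescence in $H-f$ has cost exactly $\cost(\TT) = \cost(\TT_f)$. This holds whether or not $f$ happens to lie on some path $\rho_j$, because only $\TT$ is needed. So the approximation factor here is $1$.

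\emph{Case $f = (\parent(v_i),v_i) \in \TT$.} By the discussion under ``Edge faults'', reachability in $G-f$ guarantees that $\rho_i$ exists. Also, $\rho_i$ does not use $f$, as it was computed in $G-f$. So $(\TT - f)\cup\rho_i \subseteq H - f$ connects $r$ to every vertex: $\rho_i$ starts in $\XX(v_i)$, which is reachable from $r$, and reaches $v_i$, and $\TT(v_i)$ is reachable from $v_i$ via $\TT-f$. Hence $H-f$ admits an arborescence, and Lemma~\ref{lem : preserver one vertex} gives $\cost(A_f) \le 2\cost(\TT_f)$ directly.

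Combining the two cases yields the proposition. There is no real obstacle; the only points needing care are that $H-f$ still contains an arborescence and that the case $f\notin\TT$ is not covered by the lemma. Both are handled above.
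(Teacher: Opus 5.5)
Your proof is correct and follows the paper's route: the paper gets the proposition directly from Lemma~\ref{lem : preserver one vertex}, using the ``Edge faults'' paragraph for the existence of $\rho_i$ and of an arborescence in $H-f$. Your separate treatment of the case $f \notin \TT$ (where $\TT \subseteq H-f$ is already a min-cost arborescence of $G-f$) only makes explicit what the paper leaves implicit, since its lemma is stated just for $f = (\parent(v_i),v_i) \in \TT$.
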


In order to conclude Theorem~\ref{thm:first}, we need to bound $|E_H|$ by $O(n^{3/2})$. Though any particular path $\rho_i$ 
may have $\Theta(n)$ edges, we will show in Section~\ref{sec:analysis} that the union of all these $n-1$ paths has only $O(n^{3/2})$ edges.

\subsection{A Charging Method}
\label{sec:analysis}
In building the subgraph $H$, assume without loss of generality that 
Algorithm~\ref{alg0} adds edges in the order $\rho_1, \rho_2, \ldots,\rho_{n-1}$. Thus
edges of path $\rho_1$ are added first, then edges of $\rho_2$, so on, and finally edges of $\rho_{n-1}$.
We will use a coloring scheme as follows. 

For $i = 1$ to $n-1$:
\begin{itemize}
    \item Edges belonging to $\rho_i \setminus (\rho_1\cup\cdots\rho_{i-1})$ are colored $i$.
\end{itemize}

Thus every edge $e \in \rho_1\cup\cdots\cup\rho_{n-1}$ gets a color, which is the index of the least-indexed path 
in $\{\rho_1,\ldots,\rho_{n-1}\}$ that contains $e$.
Observe that all color~1 edges form a contiguous path while this property need not hold for higher colored edges. That is, for any $i > 1$, color~$i$ edges may be scattered since some edges in the path $\rho_i$ were already colored by lower colors.

\begin{definition}
    Let $(x,y) \in V \times V$ where $x \ne y$. For any $i \in \{1,\ldots,n-1\}$, we will say {\em color} $i$ is \emph{charged to $(x,y)$} if the following three conditions are satisfied:
    \begin{enumerate}
        \item $\rho_i$ contains a subpath from $x$ to $y$;
        \item $x$ has an outgoing edge colored $i$;
        \item $y$ has an incoming edge colored $i$.
    \end{enumerate}
\end{definition}

Condition~2 says there is some edge $(x,a)$, i.e., edge leaving $x$, that is colored $i$ 
while condition~3 says there is some edge $(b,y)$, i.e., edge entering $y$, that is colored $i$. 
We will prove the following lemma in Section~\ref{sec:lemma8}.

\begin{lemma}
\label{lem_two_charged_paths}
For any pair $(x,y) \in V \times V$, at most three colors are charged to $(x,y)$.
\end{lemma}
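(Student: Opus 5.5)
The plan is to exploit that every charged color gives a \emph{shortest path} from $x$ to $y$ in a graph missing a single edge, and then to show that these replacement paths cannot be too numerous. First I would make the definition of charging concrete. Fix $(x,y)$ and let $i$ be a color charged to $(x,y)$, with $f_i=(\parent(v_i),v_i)$. Since $\rho_i$ is a simple path, $x$ has exactly one outgoing edge on $\rho_i$, and every edge colored $i$ lies on $\rho_i$. So condition~2 says that the \emph{first} edge of the subpath $Q_i:=\rho_i[x,y]$ is colored $i$, i.e.\ it lies on no $\rho_h$ with $h<i$. Likewise, condition~3 says the \emph{last} edge of $Q_i$ lies on no earlier path. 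Since a subpath of a shortest path is shortest, and shortest paths in $G-f_i$ are unique, $Q_i=\SP(x,y)\diamond f_i$. I take the perturbation of \cite{ParterP13} to give a consistent tie-breaking across all the graphs $G-f$, so that whenever two such paths both survive in a common subgraph, their costs can be compared and uniqueness applies.

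Let $P=\SP(x,y)$ in $G$, and let $i_1<i_2<\cdots$ be the colors charged to $(x,y)$. The first step shows that the paths $Q_{i_t}$ are pairwise distinct. Indeed, for $h<i$ the first edge of $Q_i$ is not on $\rho_h\supseteq Q_h$, so $Q_i\neq Q_h$. Two consequences follow.
\begin{itemize}
\item At most one charged color $i$ has $f_i\notin P$, since in that case $Q_i=P$.
\item Every other charged color has $f_i\in P$, so $Q_i$ is a genuine replacement path for $P$ avoiding $f_i$.
\end{itemize}
For two such colors $j<k$ I would prove the standard exchange fact: $f_k\in Q_j$ or $f_j\in Q_k$. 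Otherwise both $Q_j$ and $Q_k$ lie in $G-\{f_j,f_k\}$, each is shortest there, and uniqueness forces $Q_j=Q_k$, a contradiction. I would also use the canonical structure of replacement paths under consistent tie-breaking: $Q_j=P[x,a_j]\circ D_j\circ P[b_j,y]$, where $f_j\in P[a_j,b_j]$ and the detour $D_j$ is internally disjoint from $P$.

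The core step, which I expect to be the main obstacle, is showing that at most two charged colors have their fault on $P$. This gives the total bound $1+2=3$. The first-edge and last-edge conditions are what drive it. Suppose $j<k$ are both such colors. If $a_j\neq x$, then $Q_j$ begins with the first edge of $P$. Because the first edge of $Q_k$ is not on $Q_j$, $Q_k$ must leave $P$ immediately, so $a_k=x$. Symmetrically, if $b_j\neq y$ then $b_k=y$.

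Now take three such colors $j<k<l$. Both $Q_k$ and $Q_l$ are constrained by $Q_j$, and $Q_l$ is further constrained by $Q_k$. I would case on whether $a_j=x$ and whether $b_j=y$. In each case I would then argue as follows.
\begin{itemize}
\item Two of the three paths share an endpoint behavior: both leave $P$ at $x$, or both rejoin $P$ at $y$.
\item Because the first and last edges of the later path are new, its detour cannot reuse the earlier detour's first or last edge.
\item Combined with the exchange fact, this produces two distinct shortest $x$--$y$ paths in a common graph $G-\{f_\cdot,f_\cdot\}$, contradicting uniqueness.
\end{itemize}
If this direct case analysis gets messy, I would first try ordering the faults along $P$ and using that $f_k\in Q_j$ forces $f_k\in P[x,a_j]\cup P[b_j,y]$, which pins down the relative positions of the three faults and the detour endpoints.
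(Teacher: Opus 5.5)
Your preliminary steps are sound. The charged subpaths $Q_i$ are pairwise distinct, at most one charged color has $f_i\notin P$, and the exchange fact and the canonical form $Q_i=P[x,a_i]\circ D_i\circ P[b_i,y]$ hold. For $j<k$ you also correctly get $a_j\neq x\Rightarrow a_k=x$ and $b_j\neq y\Rightarrow b_k=y$.

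The gap is the core step. The claim that at most two charged colors have their fault on $P$ does not follow from first/last-edge newness, the exchange fact and uniqueness, so your planned case analysis cannot close. Consider three single-fault replacement paths:
\begin{itemize}
\item $Q_\alpha$ leaves $P$ at $x$ and rejoins it at an interior vertex $b_\alpha$;
\item $Q_\beta$ is a full detour ($a_\beta=x$, $b_\beta=y$);
\item $Q_\gamma$ follows $P$ to an interior vertex $a_\gamma$ and then detours to $y$.
\end{itemize}
Along $P$ the order is $f_\alpha$, $b_\alpha$, $f_\beta$, $a_\gamma$, $f_\gamma$. Then $f_\beta,f_\gamma\in P[b_\alpha,y]\subseteq Q_\alpha$ and $f_\beta\in P[x,a_\gamma]\subseteq Q_\gamma$, so the exchange fact holds for every pair and no uniqueness contradiction arises. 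The three first edges are distinct edges out of $x$, and the three last edges are distinct edges into $y$, so all three colors are charged in every index order.

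This configuration occurs concretely. Take $P=x\to x'\to y'\to y$ with unit costs, extra paths $x\to p\to x'$ and $y'\to s\to y$ of cost $2$, and $x\to q\to y$ of cost $10$. Then $\SP(x,y)\diamond(x,x')$, $\SP(x,y)\diamond(x',y')$ and $\SP(x,y)\diamond(y',y)$ are exactly such $Q_\alpha,Q_\beta,Q_\gamma$.

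It is also the only configuration with three charged faults on $P$. Your own observations allow at most one such $Q_i$ with $a_i\neq x$ and at most one with $b_i\neq y$. The exchange fact allows at most one full detour, and it forces the order above.

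Your intermediate claim does hold for the paths of Algorithm~\ref{alg0}, but only because of the arborescence, which you never use. The path $\rho_i$ meets $\XX(v_i)$ only at its first vertex, so all later vertices of $\rho_i$ lie in $\TT(v_i)$. In the configuration above:
\begin{itemize}
\item $v_\alpha\in P(x,b_\alpha]\subseteq P(x,a_\gamma]\subseteq Q_\gamma$ gives $v_\alpha\in\TT(v_\gamma)$;
\item $v_\gamma\in P(a_\gamma,y]\subseteq Q_\alpha$ gives $v_\gamma\in\TT(v_\alpha)$.
\end{itemize}
Together these force $f_\alpha=f_\gamma$.

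The simpler repair, which is the paper's route, is to regroup the count.
\begin{itemize}
\item A charged color whose $Q_i$ is edge-disjoint from $P$ must be a full detour. There is at most one such color, since any two such paths avoid each other's faults and so coincide by the exchange fact.
\item Every other charged color, \emph{including} the one with $Q_i=P$, has the first edge of $P$ as its first edge or the last edge of $P$ as its last edge. Hence it equals the color of one of these two edges.
\end{itemize}
This gives $1+2=3$, with the case $f_i\notin P$ absorbed into the second group instead of being counted on its own.
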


For now, we will assume Lemma~\ref{lem_two_charged_paths} and bound the number of edges in the subgraph $H$ in the following lemma.

\begin{lemma}
\label{lem_upper_bound}
    The number of edges in the set $\cup_{i=1}^{n-1}\rho_i$ is at most $\sqrt{6}\,n^{3/2}$.
\end{lemma}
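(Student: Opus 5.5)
Let $k_i$ denote the number of edges colored $i$, so $|\cup_i \rho_i| = \sum_{i=1}^{n-1} k_i$. The plan is to show that color $i$ is charged to roughly $k_i^2/2$ pairs, sum over $i$ using Lemma~\ref{lem_two_charged_paths}, and finish with Cauchy--Schwarz.

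First I lower-bound the number of pairs charged by color $i$. Since $\rho_i$ is a shortest path, it is simple. List its color-$i$ edges in order along $\rho_i$ as $(a_1,b_1),\ldots,(a_{k_i},b_{k_i})$. The tails $a_1,\ldots,a_{k_i}$ are distinct vertices of $\rho_i$, and so are the heads $b_1,\ldots,b_{k_i}$. For indices $j \le l$, the vertex $a_j$ precedes $b_l$ on $\rho_i$. Hence $\rho_i$ contains a subpath from $a_j$ to $b_l$, $a_j$ has an outgoing edge colored $i$, and $b_l$ has an incoming edge colored $i$; also $a_j \neq b_l$ because $\rho_i$ is simple and $a_j$ comes strictly earlier. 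So color $i$ is charged to $(a_j,b_l)$ for every $j \le l$. Distinct index pairs give distinct vertex pairs, since the $a_j$ are distinct and the $b_l$ are distinct. Therefore color $i$ is charged to at least $k_i(k_i+1)/2 \ge k_i^2/2$ distinct pairs.

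Next I sum over colors. By Lemma~\ref{lem_two_charged_paths}, each of the fewer than $n^2$ ordered pairs $(x,y)$ with $x\ne y$ is charged by at most three colors. Double counting the charges gives
\[
\sum_{i=1}^{n-1} \frac{k_i^2}{2} \;\le\; 3n^2, \qquad\text{so}\qquad \sum_i k_i^2 \le 6n^2 .
\]

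Finally, Cauchy--Schwarz gives
\[
\sum_{i=1}^{n-1} k_i \le \sqrt{n-1}\,\Bigl(\sum_i k_i^2\Bigr)^{1/2} \le \sqrt{n}\cdot\sqrt{6}\,n = \sqrt{6}\,n^{3/2}.
\]

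The only delicate point is the first step: the color-$i$ edges may be scattered along $\rho_i$ rather than contiguous, so the charged pairs come from tail/head combinations in path order rather than from a single subpath. Injectivity of the map $(j,l) \mapsto (a_j,b_l)$ holds because $\rho_i$ is simple, and this is what the $k_i^2/2$ lower bound relies on.
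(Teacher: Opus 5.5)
Your proof is correct and follows the paper's argument essentially step for step. You show that color $i$ is charged to at least $k_i(k_i+1)/2$ pairs via tail/head combinations in path order, double count with Lemma~\ref{lem_two_charged_paths} to get $\sum_i k_i^2 \le 6n^2$, and finish with Cauchy--Schwarz; your explicit injectivity argument from the simplicity of $\rho_i$, and stating the count as a lower bound, is a slightly more careful version of the count the paper illustrates in Figure~\ref{fig3}.
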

\begin{proof}
    We have to bound $\sum_{i=1}^{n-1} c_i$, where $c_i$ is the number of edges colored $i$. It follows from the Cauchy-Schwarz inequality that:
    \begin{eqnarray}
    \sum_{i=1}^{n-1} c_i \ \ = \ \ c_1 \cdot 1+c_2 \cdot 1+\dots+c_{n-1} \cdot 1 \ \ & \le & \ \ \sqrt{1+1+\dots +1} \,\cdot\, \sqrt{c_1^2+c_2^2 +\dots+c_{n-1}^2}\nonumber\\
           & = & \sqrt{n-1}\,\cdot\, \sqrt{\sum_{i=1}^{n-1}c_i^2}.\label{cauchy-schwarz}
    \end{eqnarray}
    
    Since $c_i$ is  the number of edges colored $i$, there are $c_i$ vertices with an outgoing edge colored $i$ and
    there are $c_i$ vertices with an incoming edge colored $i$.\footnote{The same vertex can have an incoming edge colored $i$ and also an outgoing edge colored $i$.} So the number of pairs $(x,y)$ that get charged by color~$i$ is $(c_i + (c_i-1) + (c_i-2) + \ldots + 1) = c_i(c_i+1)/2$ where the term $(c_i - k + 1)$ in this sum 
    is contributed by the $k$-th vertex with an outgoing edge colored $i$ on $\rho_i$, for $1 \le k \le c_i$
    (see Figure~\ref{fig3}). 
    
    \medskip
    
    \begin{figure}[h]
\centering
\begin{tikzpicture}[scale=1, transform shape]
       \coordinate (x) at (-5,0);
        \coordinate (y) at (5,0);
         \coordinate (a) at (-4,0);
        \coordinate (b) at (-3,0);
        \coordinate (c) at (4,0);
         \coordinate (x') at (0,0);
        \coordinate (y') at (1,0);
        
        \filldraw [black] (x) circle(2pt);
         \filldraw [black] (y) circle(2pt);
        \filldraw [black] (c) circle(2pt);
         \filldraw [black] (a) circle(2pt);
         \filldraw [black] (b) circle(2pt);
        \filldraw [black] (x') circle(2pt);
         \filldraw [black] (y') circle(2pt);
          \filldraw [black] (-2,0) circle(2pt);
           \filldraw [black] (-1,0) circle(2pt);
            \filldraw [black] (2,0) circle(2pt);
             \filldraw [black] (3,0) circle(2pt);

         \node[below, scale=1] at (x) {$x_1$};
         \node[below, scale=1] at (y) {$x_{10}$};
         \node[below, scale=1] at (a) {$x_2$};
         \node[below, scale=1] at (b) {$x_3$};
         \node[below, scale=1] at (0,0) {$x_6$};
         \node[below, scale=1] at (1,0) {$x_7$};
         \node[below, scale=1] at (4,0) {$x_9$};
        
\draw[very thick, red, ->] (-5,0) -- (-4.1,0);
\draw[very thick, red, ->] (-4,0) -- (-3.1,0);
\draw[thin, ->] (-3,0) -- (-2.1,0);
\draw[thin, ->] (-2,0) -- (-1.1,0);
\draw[thin, ->] (-1,0) -- (-0.1,0);

\draw[very thick, red, ->] (0,0) -- (0.9,0);

\draw[thin, ->] (1,0) -- (1.9,0);
\draw[thin, ->] (2,0) -- (2.9,0);
\draw[thin, ->] (3,0) -- (3.9,0);
\draw[very thick, red, ->] (4,0) -- (4.9,0);
\end{tikzpicture}
\caption{The path $\rho_i$ is from $x_1$ to $x_{10}$ and $c_i = 4$ (corresponding to the 4 red edges) because the remaining edges in $\rho_i$ were already present in $\rho_1\cup\cdots\cup\rho_{i-1}$. The 10 pairs $(x_1,x_2),(x_1,x_3)$,
 $(x_1,x_7),(x_1,x_{10}),(x_2,x_3),(x_2,x_7),(x_2,x_{10}),(x_6,x_7),(x_6,x_{10})$, and $(x_9,x_{10})$ are charged by color $i$.}
\label{fig3}
\end{figure}
    
    We have:
    \begin{eqnarray*}
        \sqrt{\sum_{i=1}^{n-1}c_i^2} \ \ & \le & \ \ \left(\sum_{i=1}^{n-1}c_i(c_i+1)\right)^{1/2}\\
                                                          \ \ & = & \ \ \left(2\cdot\sum_i\text{number of pairs of vertices charged by color}\ i\right)^{1/2}\\
                                                          \ \ & = & \ \ \left(2\cdot\sum_{(x,y)\in V\times V}\text{number of colors charged to}\ (x,y)\right)^{1/2}\\
                                                          \ \ & \le & \ \ \left(2\cdot 3n^2\right)^{1/2} = \sqrt{6}\,n,
     \end{eqnarray*}
     where the last inequality 
     follows from the fact that any pair of vertices gets charged by at most three colors
     (see Lemma~\ref{lem_two_charged_paths}). 
     Hence the right hand side of Equation~\eqref{cauchy-schwarz} can be bounded from above by $\sqrt{6}n^{3/2}$.
     So the number of edges in $\cup_{i=1}^{n-1}\rho_i$ is at most $\sqrt{6}n^{3/2}$. 
\end{proof}

Thus, assuming Lemma~\ref{lem_two_charged_paths}, the number of edges in the subgraph $H$ is 
$O(n^{3/2})$.
What is left to show is  Lemma~\ref{lem_two_charged_paths}.
For any pair $(x,y) \in V \times V$, recall that $\SP(x,y)$ is the shortest path from $x$ to $y$ in $G$.
The following observation will be useful.

\begin{observation}
\label{obs_disjoint}
    Let $(x,y) \in V \times V$. Suppose the path $q_i = \SP(x,y) \diamond e_i$ and the path $q_j = \SP(x,y) \diamond e_j$ for some edges $e_i$ and $e_j$.
    If $q_i \ne q_j$ then $e_i \in q_j$ or $e_j \in q_i$.
\end{observation}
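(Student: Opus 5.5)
Suppose $q_i \ne q_j$, and assume for contradiction that $e_i \notin q_j$ and $e_j \notin q_i$. I will exploit the uniqueness of shortest paths in every graph $G-F$, which the paper assumes in Section~\ref{sec:prelims}, and show that $q_i$ and $q_j$ must coincide.

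The key step is to note that $q_j$ avoids both $e_j$ (by definition) and $e_i$ (by assumption). So $q_j$ is a path from $x$ to $y$ in $G - \{e_i,e_j\}$, and likewise $q_i$ is a path from $x$ to $y$ in $G-\{e_i,e_j\}$. Since $q_i$ is the shortest $x$--$y$ path in $G-e_i$, and $G-\{e_i,e_j\}$ is a subgraph of $G - e_i$ containing $q_j$, we get $||q_i|| \le ||q_j||$. The symmetric argument in $G-e_j$ gives $||q_j|| \le ||q_i||$, so $||q_i|| = ||q_j||$.

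Now $q_i$ and $q_j$ are two $x$--$y$ paths of equal length in $G - e_i$, and $q_i$ is a shortest one there. Therefore $q_j$ is also a shortest $x$--$y$ path in $G-e_i$. Uniqueness of shortest paths in $G-e_i$ then forces $q_j = q_i$, contradicting $q_i \ne q_j$. Hence $e_i \in q_j$ or $e_j \in q_i$.

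There is no real obstacle here. The only point needing care is that the uniqueness assumption must be applied in the single-fault graph $G-e_i$; the paper explicitly assumes uniqueness in $G-F$ for every $F$. If one of $q_i$, $q_j$ does not exist, the observation is vacuous or should be read as stated only for existing paths.
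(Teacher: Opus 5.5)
Your proof is correct and follows essentially the paper's argument: if neither edge lies on the other path, then both $q_i$ and $q_j$ live in $G-\{e_i,e_j\}$, and uniqueness of shortest paths forces $q_i=q_j$. The only difference is minor. The paper observes that each of $q_i,q_j$, being shortest in a supergraph, is already shortest in $G-\{e_i,e_j\}$ and applies uniqueness there, whereas you compare lengths and apply uniqueness in $G-e_i$, which needs the uniqueness assumption only for single-fault graphs.
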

\begin{proof}
    If neither $e_i \in q_j$ nor $e_j \in q_i$ then both $q_i$ and $q_j$ are the same as $\SP(x,y) \diamond \{e_i,e_j\}$, 
    which is the shortest path from $x$ to $y$ in $G - \{e_i,e_j\}$. By the uniqueness of shortest paths in $G - F$ 
    for any $F \subseteq E$, it follows that $q_i = q_j$.
\end{proof}

For any path $\rho_i$ (see step~4 of Algorithm~\ref{alg0}) 
that contains vertices $x$ and $y$, let $\rho_i[x,y]$ denote the subpath from $x$ to $y$ in $\rho_i$.
Lemma~\ref{cor1} will be helpful in proving Lemma~\ref{lem_two_charged_paths}
and Observation~\ref{obs_disjoint} is key to proving Lemma~\ref{cor1}.

\begin{lemma}
    \label{cor1}
    Let $i$ and $j$ be colors such that both $\rho_i[x,y]$ and $\rho_j[x,y]$ are edge-disjoint from $\SP(x,y)$.
    Then $\rho_i[x,y] = \rho_j[x,y]$.
\end{lemma}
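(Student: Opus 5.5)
Write $e_i = (\parent(v_i),v_i)$ and $e_j = (\parent(v_j),v_j)$ for the faulty edges associated with colors $i$ and $j$. The first step identifies each subpath with a replacement shortest path. Since $\rho_i$ is a shortest path in $G-e_i$ from $\XX(v_i)$ to $v_i$, every subpath of $\rho_i$ is a shortest path in $G-e_i$ between its endpoints. Otherwise we could replace the subpath by a strictly shorter one (costs are nonnegative, and shortcutting a walk into a path does not increase its cost) and get a shorter $\XX(v_i)$-to-$v_i$ path in $G-e_i$. By uniqueness of shortest paths in $G-e_i$, this gives $\rho_i[x,y] = \SP(x,y)\diamond e_i$, and likewise $\rho_j[x,y] = \SP(x,y)\diamond e_j$. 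Set $q_i = \rho_i[x,y]$ and $q_j = \rho_j[x,y]$.

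Next I use the disjointness hypothesis to locate the faults. If $e_i \notin \SP(x,y)$, then $\SP(x,y)$ survives in $G-e_i$, so $\SP(x,y)\diamond e_i = \SP(x,y)$. Hence $q_i = \SP(x,y)$. If $x \neq y$, this path has at least one edge, which contradicts $q_i$ being edge-disjoint from $\SP(x,y)$. (The case $x=y$ is trivial, since both subpaths are empty.) So $e_i \in \SP(x,y)$, and by the same argument $e_j \in \SP(x,y)$.

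Now I argue by contradiction using Observation~\ref{obs_disjoint}. Suppose $q_i \neq q_j$. Then $e_i \in q_j$ or $e_j \in q_i$. Both alternatives are impossible: $e_i \in \SP(x,y)$ and $q_j$ is edge-disjoint from $\SP(x,y)$, so $e_i \notin q_j$; symmetrically, $e_j \notin q_i$. Therefore $q_i = q_j$, which is exactly $\rho_i[x,y] = \rho_j[x,y]$.

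The only delicate point is the first step, the subpath-optimality claim with respect to the set-to-vertex shortest path $\rho_i$. The suffix from $x$ to $v_i$ must be a shortest path in $G-e_i$ among paths starting at $x$, which the replacement argument above confirms. Since $x$ lies on $\rho_i$, splicing a shorter $x$-to-$y$ path into $\rho_i$ keeps the starting vertex in $\XX(v_i)$, so the shortened path is still a valid competitor. Uniqueness of shortest paths in $G-f$ (assumed in the preliminaries) then pins down $\rho_i[x,y]$ exactly.
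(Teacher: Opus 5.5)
Your proof is correct and follows the paper's route: subpath optimality gives $\rho_i[x,y]=\SP(x,y)\diamond e_i$, edge-disjointness forces $e_i,e_j\in\SP(x,y)$, and Observation~\ref{obs_disjoint} then yields $\rho_i[x,y]=\rho_j[x,y]$. Your extra justification for splicing a shorter path into a set-to-vertex shortest path, and your handling of the case $x=y$, fill in details the paper leaves implicit.
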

\begin{proof}
    The path $\rho_i$ is a shortest path in $G - e_i$ from some vertex $u_i$ to $v_i$, for some edge~$e_i$. 
    Because a subpath of a shortest path is again a shortest path, we have
    $\rho_i[x,y] = \SP(x,y) \diamond e_i$.
    Since $\rho_i[x,y]= \SP(x,y) \diamond e_i$ is edge-disjoint from the shortest path $\SP(x,y)$ in $G$, 
    it follows that $e_i \in \SP(x,y)$; 
    otherwise (so $e_i \notin \SP(x,y)$) it would have been the case that $\rho_i[x,y] = \SP(x,y)$. 
    Similarly, $\rho_j[x,y] = \SP(x,y) \diamond e_j$ for some $e_j \in \SP(x,y)$. 
    
    Since each of $\rho_i[x,y]$ and $\rho_j[x,y]$ is edge-disjoint from $\SP(x,y)$ while both $e_i$ and $e_j$ belong to $\SP(x,y)$, 
    it follows that neither $e_i$ nor $e_j$ is present in $\rho_i[x,y] \cup \rho_j[x,y]$. 
    Hence $\rho_i[x,y] = \rho_j[x,y]$ (by Observation~\ref{obs_disjoint}).
\end{proof}

\subsection{Bounding the Total Charge Paid by a Pair of Vertices}
\label{sec:lemma8}
We will prove Lemma~\ref{lem_two_charged_paths} in this section.
Consider any index $i \in \{1,\ldots,n-1\}$ and any pair $(x,y) \in V \times V$ such that
$\rho_i$ contains a subpath from $x$ to $y$. The following definition will be useful.

\begin{definition}
    \label{def:overlapping}
    Call index $i$ ``intersecting for $(x,y)$'' if $\rho_i[x,y]$ is not edge-disjoint from $\SP(x,y)$, i.e.,
$\rho_i[x,y] \cap \SP(x,y) \ne \emptyset$. 
\end{definition}

Call index $i$ {\em non-intersecting} (i.e., disjoint) for $(x,y)$ if $\rho_i[x,y] \cap \SP(x,y) = \emptyset$, in other words,
if $\rho_i[x,y]$ is edge-disjoint from $\SP(x,y)$ (see Figure~\ref{fig2}). 
We are now ready to prove Lemma~\ref{lem_two_charged_paths}, i.e., for any $(x,y) \in V \times V$, at most three colors 
are charged to $(x,y)$. 

\begin{figure}[h]

\centering
\begin{tikzpicture}[scale=1, transform shape]
      
       \coordinate (x) at (-5,0);
        \coordinate (y) at (5,0);
         \coordinate (a) at (-4,0);
        \coordinate (b) at (4,0);
        \coordinate (z) at (3,0);
        \coordinate (a') at (-3.1,0);
        \coordinate (a'') at (-2.2,0);
         \coordinate (b') at (0,0);
        \coordinate (b'') at (1,0);
        
         \coordinate (x') at (-1.5,0);
        \coordinate (y') at (2,0);
        
        \filldraw [black] (x) circle(3pt);
         \filldraw [black] (y) circle(3pt);

        \filldraw [black] (b') circle(2pt);
        \filldraw [black] (b'') circle(2pt);
        \filldraw [black] (a') circle(2pt);
        \filldraw [black] (a'') circle(2pt);
         \filldraw [black] (a) circle(2pt);
         \filldraw [black] (b) circle(2pt);
        \filldraw [black] (x') circle(2pt);
         \filldraw [black] (y') circle(2pt);
        \filldraw [black] (z) circle(2pt);

         \draw [very thick, cyan] (-5,0) -- (2,0.04);
         \draw [very thick, magenta] (5,0) -- (-1.5,0);
         

         \node[above, scale = 1,magenta] at (-2.7,0) {$e_1$};
         \node[above, scale = 1,cyan] at (3.5,0) {$e_3$};
         \node[above, scale = 1,orange] at (0.5,0) {$e_2$};
         
         \node[left, scale=1] at (x) {$x$};
         \node[right, scale=1] at (y) {$y$};
         \node[below, scale=1] at (a) {$a$};
         \node[below, scale=1] at (b) {$b$};
         \node[below, scale=1] at (-1.5,0) {$x'$};
         \node[below, scale=1] at (2,0) {$y'$};
         
           
\draw[very thick, cyan, ->] (-5,0) -- (-4.1,0); 
\draw[very thick, magenta, ->] (4,0) -- (4.9,0); 

\draw[very thick, cyan] (-1.5,0.05) -- (2,0.05);
           
           \path [very thick,draw,magenta] (x) .. controls (-4.5,1.5) and (-2,1.5) ..  (-1.5,0);
           \path [very thick,draw,cyan] (2,0) .. controls (2.5,1.5) and (4.5,1.5) ..  (y);
           \path [very thick,draw,orange] (x) .. controls (-3.5,-1.5) and (3.5,-1.5) ..  (y);


\end{tikzpicture}
\caption{The horizontal path from $x$ to $y$ is $\SP(x,y)$.
The {\color{magenta} magenta} path $\rho_1$ is $\SP(x,y) \diamond e_1$ for some $e_1 \in \SP(x,x')$ while
the {\color{orange} orange} path $\rho_2$ is $\SP(x,y) \diamond e_2$ for some $e_2 \in \SP(x',y')$ and 
the {\color{cyan} cyan} path $\rho_3$ is $\SP(x,y) \diamond e_3$ for some $e_3 \in \SP(y',y)$.
For the pair $(x,y)$, indices 1 and 3 are intersecting while index 2 is non-intersecting.}
\label{fig2}
\end{figure}

\begin{proof}[Proof of Lemma~\ref{lem_two_charged_paths}.]
Our first claim is that 
there is at most one non-intersecting index $k$ for which the pair $(x,y)$ gets charged. This is because for any two paths $\rho_k, \rho_{k'}$ with subpaths from $x$ to $y$ that are edge-disjoint from $\SP(x,y)$, we have $\rho_k[x,y] = \rho_{k'}[x,y]$ (by Lemma~\ref{cor1}). Thus the pair $(x,y)$ gets charged by at most one non-intersecting index $k \in \{1,\ldots,n-1\}$.

We now have to show that there are at most two intersecting indices $\ell \in \{1,\ldots,n-1\}$ such that 
(i)~$\rho_\ell$ contains a subpath from $x$ to $y$,
(ii)~$x$ has an outgoing edge colored $\ell$, and (iii)~$y$ has an incoming edge colored~$\ell$. 
Let $(x,a)$ and $(b,y)$ be the first and last edges on $\SP(x,y)$ (see Figure~\ref{fig2}). 
We will show the following claim.

\medskip

\begin{claim}
\label{clm1}
If $\ell$ is an intersecting index for the pair $(x,y)$ then (i) $(x,a) \in \rho_\ell[x,y]$ or (ii)~$(b,y) \in \rho_\ell[x,y]$.     
\end{claim}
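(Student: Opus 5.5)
Recall from the proof of Lemma~\ref{cor1} that $\rho_\ell[x,y] = \SP(x,y)\diamond e_\ell$, where $e_\ell = (\parent(v_\ell),v_\ell)$ is the faulty edge for path $\rho_\ell$. This holds because $\rho_\ell$ is a shortest path in $G-e_\ell$, so every subpath of it is again a shortest path in $G-e_\ell$. The argument then splits on whether $e_\ell$ lies on $\SP(x,y)$.

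\textbf{Case $e_\ell \notin \SP(x,y)$.} Then $\SP(x,y)$ survives in $G-e_\ell$. By uniqueness of shortest paths in $G - e_\ell$, we get $\rho_\ell[x,y] = \SP(x,y)$. In particular $\rho_\ell[x,y]$ contains both $(x,a)$ and $(b,y)$, and we are done.

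\textbf{Case $e_\ell \in \SP(x,y)$.} Since $\ell$ is intersecting, $\rho_\ell[x,y]$ shares some edge $g=(c,d)$ with $\SP(x,y)$. I will use the standard structure of a single-fault replacement path. Write $\SP(x,y) = \SP(x,s)\circ e_\ell \circ \SP(t,y)$ with $e_\ell=(s,t)$. The edge $g$ lies either in the prefix $\SP(x,s)$ or in the suffix $\SP(t,y)$.

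Suppose $g$ is in the prefix. Then $\rho_\ell[x,c]$ is a shortest $x$–$c$ path in $G-e_\ell$. But $\SP(x,c)$ is a subpath of $\SP(x,s)$, so it avoids $e_\ell$, and by uniqueness $\rho_\ell[x,c] = \SP(x,c)$.
\begin{itemize}
\item If $c \ne x$, then $\SP(x,c)$ begins with $(x,a)$, so $(x,a)\in\rho_\ell[x,y]$.
\item If $c = x$, then $g=(x,a)$ itself, and again $(x,a)\in\rho_\ell[x,y]$.
\end{itemize}

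Suppose instead $g$ is in the suffix. The symmetric argument applies: $\rho_\ell[d,y]$ is a shortest path in $G-e_\ell$, and $\SP(d,y)$ avoids $e_\ell$. By uniqueness $\rho_\ell[d,y]=\SP(d,y)$, which ends with $(b,y)$ (or $g=(b,y)$ when $d=y$).

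The main point to handle carefully is the uniqueness step. We need that the subpath of $\rho_\ell$ between two vertices equals the unique shortest path in $G-e_\ell$. This holds because subpaths of shortest paths are shortest, combined with the paper's assumption of unique shortest paths in $G-F$.

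We also need that $g$ cannot be $e_\ell$ itself. This is trivial, since $\rho_\ell$ lies in $G-e_\ell$.

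Finally, if $\rho_\ell$ revisits vertices nothing breaks, because shortest paths are simple under positive perturbed costs. So $\rho_\ell[x,y]$ is well-defined, and each case yields (i) or (ii), which proves Claim~\ref{clm1}.
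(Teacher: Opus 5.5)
Your proof is correct and uses the same key fact as the paper's. On the side of $e_\ell$ containing the shared edge (or shared vertex), $\SP(x,y)$ survives in $G-e_\ell$, so the corresponding prefix or suffix of $\rho_\ell[x,y]$ must coincide with it and therefore contain $(x,a)$ or $(b,y)$. The paper argues by contradiction, using an interior shared vertex $z$ and a stitching argument that yields a strictly shorter path in $G-e_\ell$, whereas you argue directly via uniqueness of shortest paths in $G-e_\ell$ applied to $\rho_\ell[x,c]$ (or $\rho_\ell[d,y]$); this is a difference of presentation, not of approach.
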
   
\begin{claimproof}
Suppose neither $(x,a)$ nor $(b,y)$ belongs to $\rho_\ell[x,y]$.
This means $\rho_\ell[x,y] \ne \SP(x,y)$. 
Hence $\rho_\ell[x,y] = \SP(x,y) \diamond e_\ell$ where $e_\ell \in \SP(x,y)$.
Since $\ell$ is an intersecting index, there is some interior vertex $z$ on the path $\SP(x,y)$ 
that lies on $\rho_\ell[x,y]$ (see Figure~\ref{fig30}).

\begin{figure}[h]

\centering
\begin{tikzpicture}[scale=1, transform shape]
      
       \coordinate (x) at (-5,0);
        \coordinate (y) at (5,0);
         \coordinate (a) at (-4,0);
        \coordinate (b) at (4,0);

         \coordinate (z) at (0,0);
        
        \filldraw [black] (z) circle(2pt);

         \filldraw [black] (a) circle(2pt);
         \filldraw [black] (b) circle(2pt);
        \filldraw [black] (x) circle(2pt);
         \filldraw [black] (y) circle(2pt);

         \draw [very thick, magenta] (-1.5,0.05) -- (2,0.05);
         \draw [very thick, black] (-5,0) -- (5,0);
         

         
         \node[left, scale=1] at (x) {$x$};
         \node[right, scale=1] at (y) {$y$};
         \node[below, scale=1] at (a) {$a$};
         \node[below, scale=1] at (b) {$b$};
         \node[below, scale=1] at (0,0) {$z$};
         
           
\draw[very thick, ->] (-5,0) -- (-4.1,0); 
\draw[very thick, ->] (4,0) -- (4.9,0); 

           
           \path [very thick,draw,magenta] (x) .. controls (-4.5,1.5) and (-2,1.5) ..  (-1.5,0);
           \path [very thick,draw,magenta] (2,0) .. controls (2.5,1.5) and (4.5,1.5) ..  (y);


\end{tikzpicture}
\caption{The horizontal path from $x$ to $y$ is $\SP(x,y)$. The {\color{magenta} magenta} path $\rho_{\ell}[x,y] = \SP(x,y) \diamond e_\ell$ for some $e_\ell \in \SP(x,y)$. We assumed neither $(x,a)$ nor $(b,y)$ belongs to $\rho_{\ell}[x,y]$.}
\label{fig30}
\end{figure}

Consider the following two cases.

\begin{enumerate}
    \item $e_\ell \in \SP(x,z)$: Since $e_\ell \in \SP(x,z)$, the entire path $\SP(z,y)$ belongs to $G - e_\ell$. 
    Observe that $\SP(z,y) \ne  \rho_\ell[z,y]$, because the edge $(b,y) \in \SP(z,y)$ does not belong to  $\rho_\ell[z,y]$.
    Moreover, $\rho_\ell[z,y] > ||\SP(z,y)||$ due to unique shortest paths.
    So the $x$ to $y$ path obtained 
    by stitching the paths $\rho_\ell[x,z]$ and  $\SP(z,y)$ is a shorter $x$ to $y$ path in $G - e_\ell$ 
    than  $\rho_\ell[x,y] = \SP(x,y) \diamond e_\ell$, a contradiction.
    
    \item $e_\ell \in \SP(z,y)$: Since $e_\ell \in \SP(z,y)$, the entire path $\SP(x,z)$ belongs to $G - e_\ell$. 
    Observe that $\SP(x,z) \ne  \rho_\ell[x,z]$, because the edge $(x,a) \in \SP(x,z)$ does not belong to  $\rho_\ell[x,z]$.
    Moreover, $\rho_\ell[x,z] > ||\SP(x,z)||$ due to unique shortest paths.
    So the $x$ to $y$ path obtained by stitching the paths $\SP(x,z)$ and $\rho_\ell[z,y]$ 
    is a shorter $x$ to $y$ path in $G - e_\ell$ 
    than  $\rho_\ell[x,y] = \SP(x,y) \diamond e_\ell$, a contradiction..
\end{enumerate}

Thus we can conclude that for any intersecting index $\ell$, we have 
$(x,a) \in \rho_\ell[x,y]$ or $(b,y) \in \rho_\ell[x,y]$ (or possibly both).
\end{claimproof}

Once the edge $(x,a)$ gets colored by the least index $i \in \{1,\ldots,n-1\}$ such that $(x,a) \in \rho_i$
and the edge $(y,b)$ gets colored by the least index $j \in \{1,\ldots,n-1\}$ such that $(b,y) \in \rho_j$,
no other intersecting index $\ell \in \{1,\ldots,n-1\}$ can satisfy both the following conditions:
\begin{itemize}
        \item $x$ has an outgoing edge colored $\ell$;
        \item $y$ has an incoming edge colored $\ell$.
\end{itemize}

This is because at least one of the edges $(x,a),(b,y)$ is in $\rho_\ell[x,y]$ (by Claim~\ref{clm1}) 
and both these edges have already been colored $i$ and $j$, respectively.
So any pair $(x,y)$ gets charged by at most two intersecting indices (or colors) $i$ and $j$. We already showed that
$(x,y)$ gets charged by at most one non-intersecting index (or color) $k$. Hence the lemma follows.
\end{proof}

This finishes the proof of Theorem~\ref{thm:first} stated in Section~\ref{sec:intro}.

\section{Fault tolerant preserver for min-cost matroid basis}
\label{sec:matroids}
Let $M = (E, {\cal I})$ be a matroid  with a cost function $c: E \rightarrow \mathbb{R}$.
Let $\rank: 2^E \rightarrow \mathbb{Z}_{\ge 0}$ be the rank function of $M$.
For any $S \subseteq E$, the {\em restriction} of matroid $M$ to $S$ is the matroid
$M|S = (S, {\cal I}|S)$ where ${\cal I}|S = \{X \in {\cal I}: X \subseteq S\}$. Thus 
$M|S$ is a matroid on $S$ of $\rank(S)$.

We consider the problem of computing a sparse subset $S$ of the ground set $E$ such that 
for any $F \subseteq E$ with $|F| \le k$ (i.e., elements in $F$ have failed),
the matroid $M|(S\setminus F)$ contains a 
min-cost basis of $M|(E\setminus F)$. 
Such a subset $S$ is a $k$-$\ftp$ (see Definition~\ref{def:k-ftor}).
Recall the greedy algorithm for min-cost basis of a matroid (see \cite{Goemans}). 
The greedy algorithm takes $O(m\log m)$ time to sort all elements in $E$ 
by cost and it runs at most $m$ independence tests to check if certain sets are in ${\cal I}$,
where  $|E| = m$. Consider Algorithm~\ref{alg1} given below. 

\begin{algorithm}
\caption{Computing a sparse $k$-$\ftp$ for $M = (E, {\cal I})$}
\label{alg1}
\begin{algorithmic}[1]
\State Initialize $B_0 = \emptyset$ and $E_0 = E$.
\For{$i = 1,\ldots,k+1$}
\State Let $E_i = E_{i-1} \setminus B_{i-1}$.
\State Compute a min-cost basis $B_i$ of $M|E_i$ by the greedy algorithm.
\EndFor
\State Return $S = \bigcup_{i=1}^{k+1} B_i$.
\end{algorithmic}
\end{algorithm}

In the first iteration, Algorithm~\ref{alg1} computes a min-cost basis $B_1$ of the matroid $M$.
In the second iteration, it computes a min-cost basis $B_2$ of the matroid $M|E_1$ where $E_1 = E \setminus B_1$. 
This goes on for $k+1$ iterations. That is,
in the $i$-th iteration, where $i \le k+1$,  Algorithm~\ref{alg1} computes a min-cost basis $B_i$ of the matroid
$M|E_i$ where $E_i = E\setminus(B_1\cup\cdots\cup B_{i-1})$. 

Sorting the $m$ elements takes $O(m\log m)$ time and then the algorithm runs
$O(km)$ independence tests (at most $m$ tests per~$B_i$).
We will show the union of bases $B_1,\ldots,B_{k+1}$ is a $k$-$\ftp$.
This implies there is always a $k$-$\ftp$ of size at most $k\!\cdot\!\rank(E)$. Moreover, this is a tight bound (see Remark~\ref{remark:lb}).

\subparagraph{Correctness of Algorithm~\ref{alg1}.} Let us first show correctness for $k = 1$. 
So there is only one faulty element $f$. Assume $f \in B_1$: recall that $B_1$ is a  min-cost basis of $M = (E, {\cal I})$. 
The case when $\rank(E-f) < \rank(E)$ is simple as shown in Observation~\ref{obs1}.
\begin{observation}
    \label{obs1}
If $\rank(E - f) < \rank(E)$ then $B_1 - f$ is a min-cost basis of $M|(E - f)$.    
\end{observation}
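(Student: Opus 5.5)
We argue in two steps: first that $B_1 - f$ is a basis of $M|(E-f)$, and then that it has minimum cost among all such bases.

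\emph{Step 1: $B_1 - f$ is a basis.} Since $B_1$ is a basis of $M$, we have $\rank(E) = |B_1|$. Also $f \in B_1$, so $f \notin E - f$, and the set $B_1 - f \subseteq E - f$ is independent (subsets of independent sets are independent) of size $\rank(E) - 1$. By hypothesis $\rank(E - f) < \rank(E)$, and removing one element lowers the rank by at most one, so $\rank(E-f) = \rank(E) - 1 = |B_1 - f|$. An independent subset of $E-f$ whose size equals $\rank(E-f)$ is a maximal independent set in $M|(E-f)$. Hence $B_1 - f$ is a basis of $M|(E-f)$.

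\emph{Step 2: minimality.} Let $B'$ be any basis of $M|(E-f)$. Then $|B'| = \rank(E) - 1$ and $f \notin B'$. Suppose $f$ were spanned by $B'$. Since $B' \subseteq E - f$, we would get $\rank(E) = \rank(E - f + f) \le \rank(\Span(E - f)) = \rank(E-f)$, contradicting the hypothesis. So $B' + f$ is independent of size $\rank(E)$, i.e., $B' + f$ is a basis of $M$. Because $B_1$ is a min-cost basis of $M$, $\cost(B_1) \le \cost(B') + \cost(f)$, and subtracting $\cost(f)$ gives $\cost(B_1 - f) \le \cost(B')$. Therefore $B_1 - f$ is a min-cost basis of $M|(E-f)$.

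The only point requiring care is Step 2: we must check that adding $f$ to an arbitrary basis of $M|(E-f)$ yields a basis of $M$. This follows directly from the rank drop: if $B'$ spanned $f$, then $E - f$ would span all of $E$, which is impossible. Everything else is routine. Note that costs may be negative here, but that is harmless, since the argument only compares costs of bases of equal size and subtracts the same term $\cost(f)$ from both sides.
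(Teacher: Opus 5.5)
Your proof is correct and is essentially the paper's argument: the paper also uses $\rank(E-f)<\rank(E)$ to get $f\notin\Span(E-f)$, so that $B'+f$ is a basis of $M$ for any basis $B'$ of $M|(E-f)$, and then compares its cost with that of $B_1$ (the paper phrases this by contradiction, you argue directly). Your Step 1, which checks that $B_1-f$ is actually a basis of $M|(E-f)$, fills in a point the paper leaves implicit.
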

\begin{proof}
    Suppose there is some basis $B'$ of $M|(E-f)$ such that $c(B') < c(B_1 - f)$.
    Since $\rank(E - f) < \rank(E)$, we have $f \notin \Span(E-f)$.
    Thus $B'+f$ is a basis of $M$ and moreover, $c(B'+f) < c(B_1)$. This 
    contradicts the fact that $B_1$ is a min-cost basis of $M$.    
\end{proof}

Let us now assume that $\rank(E-f) = \rank(E)$. 
We will show there exists $b \in B_2$ such that $B_1 -f + b$ is a min-cost basis of $M|(E-f)$, where
$B_2$ is a min-cost basis of $M|(E \setminus B_1)$.
Lemma~\ref{lemma1} is proved at the end of this section.

\begin{lemma}
\label{lemma1}
There exists an element $x \in E - f$ such that $B_1 - f + x$ is a min-cost basis of $M|(E-f)$.
\end{lemma}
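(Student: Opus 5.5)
We prove Lemma~\ref{lemma1} under the standing assumption $f \in B_1$ and $\rank(E-f) = \rank(E)$; the case $\rank(E-f)<\rank(E)$ is handled by Observation~\ref{obs1}. The plan is to start from a min-cost basis of $M|(E-f)$ that shares as many elements with $B_1$ as possible, and show by an exchange argument that it differs from $B_1$ in only one element.

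First, note that the bases of $M|(E-f)$ are exactly the bases of $M$ that avoid $f$, since the two matroids have the same rank. Among all min-cost bases of $M|(E-f)$, pick $B'$ maximizing $|B'\cap B_1|$. Clearly $f \notin B'$. If $B_1 - f \subseteq B'$, then $B' = B_1 - f + x$ for a single $x \in E-f$, and we are done.

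Otherwise there is some $g \in (B_1 - f)\setminus B'$, and we derive a contradiction. Apply the symmetric (strong) basis exchange property to the bases $B_1$ and $B'$ of $M$ at $g \in B_1\setminus B'$. This gives $h \in B'\setminus B_1$ such that both $B_1 - g + h$ and $B' - h + g$ are bases of $M$. Three consequences follow:
\begin{itemize}
\item Since $B_1$ is a min-cost basis of $M$, $c(B_1 - g + h) \ge c(B_1)$, i.e.\ $c(h)\ge c(g)$.
\item The set $B' - h + g$ avoids $f$, because $g\ne f$ and $f\notin B'$. So it is a basis of $M|(E-f)$, and its cost $c(B') - c(h) + c(g)$ is at most $c(B')$.
\item Hence $B' - h + g$ is also a min-cost basis of $M|(E-f)$, and it meets $B_1$ in one more element than $B'$ does.
\end{itemize}
This contradicts the maximality of $|B'\cap B_1|$. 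Therefore $B_1 - f\subseteq B'$, and since $|B'| = |B_1|$, $B'$ has the form $B_1 - f + x$.

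The main obstacle is justifying the symmetric exchange property. The paper only states the one-sided augmentation axiom, so this may need to be cited (e.g., from Oxley) or proved.

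If one prefers to use only standard facts, there is an alternative route via fundamental circuits. Take $C$ to be the unique circuit in $B'+g$; it contains $g$ and some element $h\in B'\setminus B_1$, for otherwise $C\subseteq B_1$, which is impossible. Then $B'-h+g$ is a basis. To get $c(h)\ge c(g)$, use the cut/circuit optimality characterization of min-cost bases. Either approach uses only elementary matroid facts, with the exchange step being the single nontrivial ingredient.
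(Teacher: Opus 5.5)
Your main argument is correct, and it takes a genuinely different route from the paper's.

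\textbf{Comparison with the paper.} The paper's proof is constructive and goes through the greedy algorithm. It writes $B_1=\{e_1,\ldots,e_r\}$ in cost order with $f=e_i$, and takes $x$ to be the first element outside $B_1$ that greedy selects on $E-e_i$. Since greedy on $E$ rejected $x$, we have $x\in\Span(I_j)$. This gives $e_i\in\Span(I_j-e_i+x)$, so $B_1-e_i+x$ is a basis. Optimality then follows by contradiction with the fact that each greedy prefix $I_t$ is a min-cost independent set of size $t$.

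You instead take a min-cost basis of $M|(E-f)$ with maximum overlap with $B_1$, and use symmetric exchange to show the overlap can be increased until it contains $B_1-f$. Your version is shorter and needs no bookkeeping about the order in which greedy processes elements. It also shows directly that some min-cost basis of $M|(E-f)$ contains $B_1-f$. The paper's version names the swap element explicitly and uses only the augmentation axiom plus the greedy prefix property. Yours rests on Brualdi's symmetric exchange theorem, a standard but nontrivial fact that you should cite.

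\textbf{The fallback via fundamental circuits.} Your fallback does not work as sketched. For an arbitrary $h\in C\cap(B'\setminus B_1)$ there is no reason to have $c(h)\ge c(g)$. In fact, optimality of $B'$ in $M|(E-f)$ gives $c(h)\le c(g)$ for every $h\in C$, and the inequality can be strict.

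For example, take the graphic matroid of the undirected graph on vertices $1,2,3,4$. Let the edges $(1,4)=f$, $(4,3)$, $(1,3)$ have cost $0$, and the edges $(1,2)=g$, $(3,2)$ have cost $1$. Take $B_1=\{f,(4,3),g\}$ and $B'=\{(4,3),(1,3),(3,2)\}$. Choosing $h=(1,3)$ makes $B'-h+g$ cost $2$ instead of $1$.

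The fix is to choose $h\in C\cap D$, where $D=\{e : B_1-g+e \text{ is a basis}\}$ is the fundamental cocircuit of $g$ with respect to $B_1$. Since $g\in C\cap D$, and a circuit never meets a cocircuit in exactly one element, such an $h\neq g$ exists. It lies in $B'\setminus B_1$, and $c(h)\ge c(g)$ follows from optimality of $B_1$. This is exactly the standard proof of symmetric exchange, so the fallback is the same argument as your main route, not an independent one.
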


By Lemma~\ref{lemma1}, we know there exists some ``swap element'' $x \in E - f$ to complete $B_1-f$ to a min-cost 
basis of $E-f$. Our goal now is to show that $B_2$ contains such a swap element. 
If $x \in B_2$ then there is nothing to prove. So let us assume that $x \notin B_2$. 

\begin{lemma}
\label{lemma2}
Let $b_1,\ldots,b_\ell$ be all the elements $b_i \in B_2$ such that $B_2 -b_i + x$ is a basis of $M|(E\setminus B_1)$.
Then $x \in \Span(\{b_1,\ldots,b_\ell\})$.
\end{lemma}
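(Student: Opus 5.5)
This is the fundamental circuit statement for $x$ with respect to the basis $B_2$ of $M|(E\setminus B_1)$. First observe that $x \in E\setminus B_1$. Indeed, $x \ne f$ by choice, and $x \notin B_1 - f$, since otherwise $B_1 - f + x$ would not be a basis. So $x$ is an element of the ground set of $M|(E\setminus B_1)$, and we have assumed $x \notin B_2$.

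The plan is to use the unique circuit $C \subseteq B_2 + x$ containing $x$.

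\textbf{Step 1: $x$ is spanned by $B_2$ in the restricted matroid.} Since $B_2$ is a basis of $M|(E\setminus B_1)$ and $x \in (E\setminus B_1)\setminus B_2$, the set $B_2 + x$ is dependent. Any independent set of $M|(E\setminus B_1)$ is independent in $M$, and vice versa within $E\setminus B_1$, so rank computations agree. Hence $B_2 + x$ contains a unique circuit $C$, and $x \in C$; uniqueness follows because $B_2$ is independent.

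\textbf{Step 2: identify $C - x$ with $\{b_1,\ldots,b_\ell\}$.} This is the standard characterization: for $b \in B_2$, the set $B_2 - b + x$ is a basis of $M|(E\setminus B_1)$ if and only if $b \in C$.
\begin{itemize}
\item If $b \in C$, then $B_2 - b + x$ contains no circuit. Any circuit in it would lie in $B_2 + x$, hence equal $C$, but $b \notin B_2 - b + x$. It therefore has size $|B_2| = \rank(E\setminus B_1)$ and is a basis.
\item If $b \notin C$, then $C \subseteq B_2 - b + x$, so this set is dependent.
\end{itemize}
Thus $\{b_1,\ldots,b_\ell\} = C - x$.

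\textbf{Step 3: conclude.} Since $C$ is a circuit, $C - x$ is independent while $C$ is dependent. Therefore $\rank((C-x)+x) = |C|-1 = \rank(C-x)$, which means $x \in \Span(C-x) = \Span(\{b_1,\ldots,b_\ell\})$.

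The only mildly delicate point is Step 2's first direction, which needs the uniqueness of the circuit in $B_2 + x$. That uniqueness is the standard fact that if two distinct circuits $C, C'$ both contained $x$, circuit elimination would give a circuit inside $(C\cup C') - x \subseteq B_2$, contradicting the independence of $B_2$. I expect no real obstacle beyond citing this.
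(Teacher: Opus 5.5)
Your proof is correct, but it takes a different route from the paper. The paper argues by contradiction using only the augmentation axiom. If $x \notin \Span(I)$ for $I=\{b_1,\ldots,b_\ell\}$, then $I+x$ is independent, and it can be extended by elements of $B_2$ to a basis $I^*$ of $M|(E\setminus B_1)$. Since $I^*\subseteq B_2+x$ and $|I^*|=|B_2|$, we get $I^*=B_2-b_t+x$ for some $b_t\in B_2\setminus I^*$. Then $b_t$ qualifies for membership in $I\subseteq I^*$, which is a contradiction.

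You instead identify $\{b_1,\ldots,b_\ell\}$ exactly as $C-x$, where $C$ is the fundamental circuit of $x$ in $B_2+x$. The span statement then follows at once from $C$ being a circuit. Your approach gives more structure: the valid swap partners are precisely the fundamental circuit minus $x$, not merely some set spanning $x$. The cost is importing circuit elimination to get uniqueness of $C$. The paper's argument is more self-contained, needing only the exchange property and the rank definition of span.

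You also check explicitly that $x\in E\setminus B_1$, using the standing assumption $\rank(E-f)=\rank(E)$ so that $B_1-f$ alone is not a basis. The paper uses this only tacitly when it treats $I+x$ as an independent set of $M|(E\setminus B_1)$.
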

\begin{proof}
    Suppose  $x \notin \Span(\{b_1,\ldots,b_\ell\})$.
    Then $\rank(I + x) = \rank(I) + 1$ where 
    $I = \{b_1,\ldots,b_\ell\}$. So $\rank(I+x) = |I| + 1$, thus $I + x \in {\cal I}$. 
    Consider the independent sets $I+x$ and $B_2$. While $|B_2| > |I+x|$, augment $I+x$ with elements of $B_2 \setminus (I+x)$ so that the final augmented set $I^*$ is a basis of $M|(E\setminus B_1)$. 
    Since $x \in I^* \setminus B_2$, there exists an element $b_t \in B_2 \setminus I^*$.
    Thus $I^* = B_2 - b_t + x$. 
    
    In other words, $B_2 - b_t + x$ is a basis of $M|(E\setminus B_1)$. This means
    $b_t \in B_2$ satisfies the definition for elements to belong to the original $\ell$-set $I = \{b_1,\ldots,b_\ell\}$. 
    But $b_t \in B_2 \setminus I^*$, so $b_t\notin I$, a contradiction. Thus  $x \in \Span(\{b_1,\ldots,b_\ell\})$.
\end{proof}

Let $I = \{b_1,\ldots,b_\ell\}$ (from Lemma~\ref{lemma2}). Lemma~\ref{lemma3} shows $I$ has the desired element $b_i$.
\begin{lemma}
\label{lemma3}
There exists an element $b_i \in I$ such that $B_1 - f + b_i$ is a min-cost basis of $M|(E-f)$.
\end{lemma}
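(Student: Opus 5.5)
Set-up: $x$ is the element from Lemma~\ref{lemma1}, so $B_1 - f + x$ is a min-cost basis of $M|(E-f)$ and $x \notin B_2$. The plan is to find some $b_i \in I$ that (a) makes $B_1 - f + b_i$ a basis and (b) satisfies $c(b_i) \le c(x)$. Then $c(B_1 - f + b_i) \le c(B_1 - f + x)$, which is optimal, and the lemma follows.

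\textbf{Cost bound.} Every $b_i \in I$ satisfies $c(b_i) \le c(x)$. Note that $x \in E - f$ and $x \notin B_1$: otherwise $B_1 - f + x$ has only $\rank(E)-1$ elements and could not be a basis, since $\rank(E-f)=\rank(E)$. Hence $x \in E \setminus B_1$. Since $B_2$ is a min-cost basis of $M|(E\setminus B_1)$ and $B_2 - b_i + x$ is also a basis of that matroid, we get $c(B_2) \le c(B_2) - c(b_i) + c(x)$, i.e.\ $c(b_i) \le c(x)$.

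\textbf{Basis property.} Let $C$ be the unique fundamental circuit of $B_1 - f + x$ with respect to $f$, i.e.\ the unique circuit in $B_1 + x$. It contains $x$ and $f$, so $f \notin \Span(B_1 - f)$ while $x \in \Span(B_1)$. We need some $b_i \in I$ with $b_i \notin \Span(B_1 - f)$, because then $B_1 - f + b_i$ is independent of size $\rank(E)$ and hence a basis. Suppose instead that every $b_i \in I$ lies in $\Span(B_1 - f)$. Then $\Span(I) \subseteq \Span(B_1 - f)$. By Lemma~\ref{lemma2}, $x \in \Span(I)$, so $x \in \Span(B_1 - f)$. This contradicts the independence of $B_1 - f + x$. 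Hence such a $b_i$ exists.

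\textbf{Conclusion.} The chosen $b_i$ gives a basis $B_1 - f + b_i$ of $M|(E-f)$ with cost at most $c(B_1 - f + x)$, so it is a min-cost basis. The main obstacle is the basis property. It rests entirely on Lemma~\ref{lemma2} together with monotonicity of span, and the cost comparison is routine.
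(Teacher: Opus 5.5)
Your proof is correct and follows the paper's argument. You get $c(b_i)\le c(x)$ from the minimality of $B_2$ in $M|(E\setminus B_1)$, and you get the basis property by contradiction, using $\Span(I)\subseteq\Span(B_1-f)$ together with Lemma~\ref{lemma2}; your explicit check that $x\notin B_1$ is a small addition the paper leaves implicit, and the fundamental-circuit aside is never used and can be dropped.
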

\begin{proof}
By definition of the set $\{b_1,\ldots,b_\ell\}$ (see Lemma~\ref{lemma2}) and the fact that $B_2$ is a min-cost
basis of $E \setminus B_1$, we have $c(B_2) \le c(B_2 -b_i + x)$ for all $i \in \{1,\ldots,\ell\}$. Thus
$c(b_i) \le c(x)$ for all $i \in \{1,\ldots,\ell\}$. Thus we need to show that for some $i \in \{i,\ldots,\ell\}$,
$B_1 -f + b_i$ is a basis of $M|(E-f)$.
Suppose not. This means $b_i \in \Span(B_1 - f)$ for all $i \in \{1,\ldots,\ell\}$.
So $\{b_1,\ldots,b_\ell\} \subseteq \Span(B_1 - f)$. Hence 
 $\Span(\{b_1,\ldots,b_\ell\}) \subseteq \Span(\Span(B_1 - f)) = \Span(B_1 - f)$.
 
 We know from Lemma~\ref{lemma2} that $x \in \Span(\{b_1,\ldots,b_\ell\})$. Thus $x \in \Span(B_1 - f)$,
 a contradiction to the fact that $B_1 -f + x$ is a basis of $E - f$. Thus it must be the case that 
$B_1 -f + b_i$ is a basis of $M|(E-f)$, for some $i \in \{i,\ldots,\ell\}$.
\end{proof}

We prove Lemma~\ref{lemma1} below.

\begin{proof}[Proof of Lemma~\ref{lemma1}]
    Let $\rank(E) = r$. Let $B_1 = \{e_1,\ldots,e_r\}$ where $c(e_1) \le \ldots \le c(e_r)$. 
    Suppose $f = e_i$. Consider the greedy algorithm on $E - e_i$. After selecting elements $e_1,\ldots,e_{i-1}$ and possibly
    $e_{i+1},\ldots,e_j$, a new element $x \notin B_1$ gets selected by the greedy algorithm.
    So $I_j - e_i + x$ is an independent set of size~$j$ where $I_j = \{e_1,\ldots,e_j\}$.
    While $|I_j - e_i + x| < |B_1|$, the set $I_j -e_i +x$ can be augmented with elements of $B_1 \setminus (I_j - e_i + x)$ 
    such that the final augmented set $I^*$ is a basis of $M|(E-e_i)$.
    
    Observe that $I^* = B_1 - e_i + x$.
    This is because $\rank(I_j - e_i + x + e_i) = \rank(I_j + x) = j$ since the
    greedy algorithm on $E$ did not select $x$. Hence $e_i \in \Span(I_j -e_i +x)$. Moreover, all elements of $B_1 - e_1$ are in $I^*$.
    Thus $e_1,\ldots,e_{i-1},e_{i+1}$, $\ldots,e_j,x,e_{j+1},\ldots,e_r$ is a listing of elements of $I^*$ 
    in non-decreasing order of cost.  
    We claim $I^*$ is a min-cost basis of $M|(E-e_i)$. 
    
    Suppose not. Then $c(I^*) > c(B')$ where $B'$ is a min-cost basis of $M|(E-e_i)$.
    So the greedy algorithm---after selecting elements
    $e_1,\ldots,e_{i-1},e_{i+1},\ldots,e_j,x$---selects some elements $y_{j+1},y_{j+2},\ldots,y_r$ 
    where $c(y_{j+1}) \le c(y_{j+2}) \le \cdots \le c(y_r)$
    to form $B'$ where $c(B') < c(I^*)$.  
    Let $t$ be the least index such that $c(y_t) < c(e_t)$.
    
    Let $B'_t = \{e_1,\ldots,e_{i-1},e_{i+1},\ldots,e_j,x,y_{j+1},\ldots,y_t\}$
    be $B'$ restricted to first $t$ elements. 
    Consider the independent sets $I_{t-1} = \{e_1,\ldots,e_{t-1}\}$ and $B'_t$. 
    Since
    $|I_{t-1}| < |B'_t|$, there exists an element $z \in B'_t\setminus I_{t-1}$ 
    such that $I_{t-1} + z \in {\cal I}$. Observe that $z = y_\ell$ for some $j+1 \le \ell \le t$.
    We have $c(I_{t-1}) + c(z) < c(I_t)$ since $c(z) \le c(y_t) < c(e_t)$ for all $z \in B'_t$.
    Thus $c(I_{t-1}+z) < c(I_t)$ where $I_t = \{e_1,\ldots,e_t\}$. This contradicts the fact that $I_t$ is a min-cost size~$t$ independent set in the given matroid $M$: recall that the greedy algorithm finds a min-cost independent set of size $\ell$ for each $\ell \le r$.
    Hence $B_1 - e_i + x$ is a min-cost basis of $M|(E-e_i)$.
\end{proof}

This completes the proof of correctness for $k = 1$. 
\subparagraph{Multiple element failures.}
We know that $B_2$ is a min-cost basis of the restriction of $M$ to $E\setminus B_1$. Upon failure of an 
element $f \in B_1$, a ``swap element'' $f'$ moving from $B_2$ to $B_1$ can be interpreted as the {\em failure} of
an element $f' \in E\setminus B_1$. This is totally analogous to the failure of an element $f \in E$ studied so far. 

Thus $B_2$ has lost an element $f'$ and it follows from Lemmas~\ref{lemma1}-\ref{lemma3} that
there exists a ``swap element'' $f'' \in B_3$ of a min-cost basis in $M|(E\setminus (B_1+f'))$.
Thus recursively, a ``swap element'' $f''$ moving from $B_3$ to $B_2$ can be interpreted as the {\em failure} of
an element $f'' \in E\setminus (B_1 \cup B_2)$. We do this for $k$ levels and thus finally, the min-cost basis $B_{k+1}$
of $E \setminus (\cup_{i=1}^k B_i)$ has lost an element. In other words, this element has moved from $B_{k+1}$ to $B_k$
to replace an element that $B_k$ lost (which moved from $B_k$ to $B_{k-1}$) and so on.

The ground set now is $E' = E-f$ and we have a min-cost basis $B'_1$ of the matroid $M' = M|(E-f)$ and a min-cost
basis $B'_2$ of the matroid $M'|(E'\setminus B'_1)$ and more generally, a min-cost
basis $B'_i$ of the matroid $M'|(E'\setminus \cup_{j=1}^{i-1}B'_j)$ where $1 \le i \le k$. There are
$k-1$ more element failures to process and we have the bases $B'_1,\ldots,B'_k$ with us. Recall that we had $k+1$
bases $B_1,\ldots, B_{k+1}$ at the end of Algorithm~\ref{alg1}.
Thus we have lost one basis $B_{k+1}$---however 
we have processed one element failure (call this failed element $f_1$) and this has resulted in the movement of one element 
from $B_{i+1}$ to $B_i$ for each $i \in \{1,\ldots,k\}$. 

We next process another element failure $f_2$ and this will again result in the movement of one element from $B'_2$ to $B'_1$
and similarly, from $B'_3$ to $B'_2$, and so on. Finally, one element will move from $B'_k$ to $B'_{k-1}$. Thus we will be
left with $k-1$ bases $B''_1,\ldots,B''_{k-1}$ and $k-2$ element failures to process. Hence when it is the turn of
the failure of the $k$-th element $f_k$ to process, we will have two bases $B^{k-1}_1$ and $B^{k-1}_2$ with us. As
seen in Lemmas~\ref{lemma1}-\ref{lemma3}, there will be a ``swap element'' in $B^{k-1}_2$ that can be added to 
$B^{k-1}_1$ so that we have a min-cost basis $B^k_1$ of $E \setminus (f_1\cup f_2\cup\cdots\cup f_k)$.
This completes the proof of correctness of our algorithm. Hence we can conclude Theorem~\ref{thm:k-ftor} stated in Section~\ref{sec:intro}.

\section{Concluding Remarks}
\label{sec:discussion}
Let $G = (V,E)$ be a directed graph on $n$ vertices with a designated root vertex $r$ and non-negative edge costs. 
We gave a simple algorithm to construct a subgraph $H$ of $G$ with $O(n^{3/2})$ edges such that a min-cost arborescence 
in $H-f$ is a 2-approximate min-cost arborescence in $G - f$, for any $f \in E$. Thus any weighted directed graph admits 
a sparse 1-EFT (single edge fault tolerant) subgraph for 2-approximate min-cost arborescence.
The following are interesting open problems.
\begin{enumerate}
    \item Is there a sparse 1-EFT subgraph/preserver for exact min-cost arborescence? Issues involved in extending our construction to a preserver are discussed in the appendix.
    \item Another open problem is on the tightness of our upper bound for 1-EFT subgraph for
    2-approximate min-cost arborescence. Can our bound of $O(n^{3/2})$ be improved?
    \item Another very interesting open problem is to solve the sparse $k$-EFT subgraph problem for exact/approximate 
min-cost arborescence for $k > 1$, i.e., for more than one edge fault.
\end{enumerate}

We also considered the $k$-fault tolerant preserver ($k$-FTP) problem
in a matroid $M = (E, {\cal I})$ for any $k \ge 1$.  We showed there is a subset $S$ of $E$ of size at most $k\!\cdot\!\rank(E)$ 
such that for any 
$F \subseteq E$ with $|F| \le k$, the restriction $M|(S\setminus F)$ contains a min-cost basis of $M|(E\setminus F)$.
There is a lower bound of $k\!\cdot\!\rank(E)$ on $k$-FTP (see Remark~\ref{remark:lb}); thus our bound is tight.

\paragraph{\bf Acknowledgements.} This work was done when DD was a Visiting Fellow at TIFR and supported by the Department of Atomic Energy, Government of India, under project no. RTI4001. TK acknowledges support by the Department of Atomic Energy, Government of India, under project no. RTI4014.


\section*{Appendix}
\subparagraph{\bf Extending Algorithm~\ref{alg0} to find a 1-EFT preserver for min-cost arborescence.}
In order to update $\TT$ to a min-cost arborescence $\TT_f$ in $G-f$, 
edge costs in $G$ should be replaced with {\em effective} edge costs, 
where each vertex reduces the cost of its incoming edge in $\TT$ from the cost of each of its incoming edges in $G - f$.
Then $\cost(\TT_f) = \cost(\TT) + \sum_{e \in \TT_f}\cost'(e)$,
where $\cost'(e)$ is the effective cost of $e$.

\smallskip

Let $\rho'_i$ be the shortest path with respect to effective edge costs from $\XX(v_i)$ to $v_i$. 
 Consider replacing $\rho_1,\ldots,\rho_n$ in Algorithm~\ref{alg0} with $\rho'_1,\ldots,\rho'_n$.
 Let $H' = \TT \bigcup_{i=1}^{n-1}\rho'_i$.

\smallskip

 Note that a min-cost arborescence $A'_f$  in $H'-f$
 need not be a min-cost arborescence in $G-f$.
This is because some vertices (such as $w,x,y$ in Figure~\ref{upwarda}) may have incoming edges with negative effective cost from 
    non-descendants in $A'_f$ (like $v$ in Figure~\ref{upwarda}). These vertices $w,x,y$ are ancestors in $\TT$ of vertices on $\rho'_i$, 
    but they are no longer their ancestors in $A'_f$.

\begin{figure}[h]
\centering

\begin{minipage}{0.45\textwidth}
\centering
\begin{tikzpicture}[scale=1]

\coordinate (r) at (0,5);
\coordinate (u) at (-1.5,-2);
\coordinate (v) at (-1.75,-3);

\draw (r) node[above] {$r$};
\draw (v) node[below] {$v$};

\filldraw (r) circle(2pt);
\filldraw (u) circle(2pt);
\filldraw (-1.75,-3) circle(2pt);

\filldraw (-0.75,1.5) circle(2pt);
\filldraw (-0.95,0.5) circle(2pt);
\filldraw (-1.17,-0.5) circle(2pt);
\filldraw (-1.35,-1.25) circle(2pt);
\filldraw (-0.53,2.5) circle(2pt);
\filldraw (-0.37,3.25) circle(2pt);
\filldraw (-0.15,4.25) circle(2pt);
\filldraw (-0.15,1) circle(2pt);
\filldraw (0.12,0) circle(2pt);
\filldraw (0.55,-1) circle(2pt);
\filldraw (0.65,-2) circle(2pt);
\filldraw (1,-3) circle(2pt);

\draw (r) -- (u);
\draw[densely dashed] (u) -- (v);

\draw[decorate, decoration={snake, amplitude=1mm, segment length=16mm}] 
(-0.53,2.5) -- (1,-3);

\path [very thick,draw,magenta] 
(1,-3) .. controls (0.5,-4) and (-0.75,-4) .. (v);

\path [very thick,draw,cyan] 
(v) .. controls (-2,-1.5) and (-2,-1.5) .. (-1.17,-0.5);

\path [very thick,draw,cyan] 
(v) .. controls (-3,-1.5) and (-3,-1.5) .. (-0.95,0.5);

\path [very thick,draw,cyan] 
(v) .. controls (-4,-1.5) and (-4,-1.5) .. (-0.75,1.5);

\draw (1,-3) node[right] {$u$};
\draw (u) node[right] {$p(v)$};
\draw (-1.17,-0.5) node[right] {$w$};
\draw (-0.95,0.5) node[right] {$x$};
\draw (-0.75,1.5) node[right] {$y$};

\end{tikzpicture}
\subcaption{Let all \textcolor{cyan}{cyan} edges have negative effective cost. The \textcolor{magenta}{magenta} edge $(u,v)$ is expensive, but it needs to be included in any arborescence in $G - (p(v),v)$. Then the \textcolor{cyan}{cyan} edges should also be included in the arborescence to bring down its cost.}
\label{upwarda}
\end{minipage}
\hfill
\begin{minipage}{0.45\textwidth}
\centering
\begin{tikzpicture}[scale=1]

\coordinate (r) at (0,5);
\coordinate (u) at (-1.5,-2);
\coordinate (v) at (-1.75,-3);

\draw (r) node[above] {$r$};
\draw (v) node[below] {$v$};

\filldraw (r) circle(2pt);
\filldraw (u) circle(2pt);
\filldraw (-1.75,-3) circle(2pt);

\filldraw (-0.75,1.5) circle(2pt);
\filldraw (-0.95,0.5) circle(2pt);
\filldraw (-1.17,-0.5) circle(2pt);
\filldraw (-1.35,-1.25) circle(2pt);
\filldraw (-0.53,2.5) circle(2pt);
\filldraw (-0.37,3.25) circle(2pt);
\filldraw (-0.15,4.25) circle(2pt);
\filldraw (-0.15,1) circle(2pt);
\filldraw (0.12,0) circle(2pt);
\filldraw (0.55,-1) circle(2pt);
\filldraw (0.65,-2) circle(2pt);
\filldraw (1,-3) circle(2pt);

\draw (r) -- (u);
\draw[densely dashed] (u) -- (v);

\draw[decorate, decoration={snake, amplitude=1mm, segment length=16mm}] 
(-0.53,2.5) -- (1,-3);

\path [very thick,draw,magenta] 
(1,-3) .. controls (0.5,-4) and (-0.75,-4) .. (v);

\path [very thick,draw,cyan] 
(v) .. controls (-2,-1.5) and (-2,-1.5) .. (-1.17,-0.5);

\path [very thick,draw,cyan] 
(v) .. controls (-3,-1.5) and (-3,-1.5) .. (-0.95,0.5);

\path [very thick,draw,cyan] 
(v) .. controls (-4,-1.5) and (-4,-1.5) .. (-0.75,1.5);

\path [very thick,draw,orange] 
(v) .. controls (0,-1.75) and (0,-1.75) .. (-1.17,-0.5);

\draw (-1.17,-0.5) node[right] {$w$};
\draw (-0.95,0.5) node[right] {$x$};
\draw (-0.75,1.5) node[right] {$y$};
\draw (1,-3) node[right] {$u$};
\draw (u) node[right] {$p(v)$};

\end{tikzpicture}
\subcaption{The \textcolor{orange}{orange} $(w,v)$ edge is cheaper than the \textcolor{magenta}{magenta} edge $(u,v)$. But using the \textcolor{magenta}{magenta} edge will be more optimal as it allows us to include the \textcolor{cyan}{cyan} edges (they have negative effective cost) in our arborescence and this will bring down the entire cost.}
\label{upwardb}
\end{minipage}

\caption{Key differences between 1-EFT preserver for min-cost arborescence and our construction.}
\label{fig:upward}
\end{figure}

If there is a vertex with an incoming edge $e$ in $G-f$ from a {\em non-descendant} in $A'_f$ where $\cost'(e) < 0$ 
(thus $e$ has negative effective cost such as the cyan edges in Figure~\ref{upwarda}) then $A'_f$ is not a min-cost arborescence in $G-f$.
    Hence the subgraph $H'$ needs to be augmented with more edges.
    Furthermore, instead of using the cheapest path $\rho'_i$ from $\XX(v_i)$ to $v_i$,
    it might be more optimal to use another path $q_i$. Using $q_i$---instead of $\rho'_i$---may enable us to use a different set 
    of upward edges (such as the cyan edges in Figure~\ref{upwardb}). So even if $\rho'_i$ is cheaper than $q_i$,
    the cost-saving that we enjoy due to these upward edges more than makes up for the difference in their effective costs. See Figure~\ref{upwardb} for an illustration.

\smallskip

Recall that we used shortest paths $\rho_i$ from $\XX(v_i)$ to $v_i$ in $G - e_i$ (for $i = 1,\ldots,n-1$)
to construct the subgraph $H$ in Section~\ref{sec:approx}. Furthermore, properties of shortest paths were crucial to bound the size of~$H$ by $O(n^{3/2})$. As discussed above, a 1-EFT preserver for min-cost arborescence in $G$
seems to need a new approach that is not via shortest paths. 
Thus it is not clear if there always exists a sparse 1-EFT preserver for min-cost arborescence in weighted directed
graphs and moreover, how to construct such a sparse subgraph efficiently.
\end{document}